%
\newcommand{\fullVersion}{}

\date{}

\ifdefined\lipics{}
\documentclass[a4paper,UKenglish]{lipics-v2018}
\else
\documentclass[11pt]{article}

\usepackage[margin=1in]{geometry}
\usepackage{amsmath}
\usepackage{amsthm}
\usepackage{bm}
\newtheorem{theorem}{Theorem}[section]
\newtheorem{lemma}[theorem]{Lemma}
\newtheorem{corollary}[theorem]{Corollary}
\newtheorem{claim}[theorem]{Claim}
\newtheorem{proposition}[theorem]{Proposition}
\newtheorem{definition}[theorem]{Definition}

\newcommand{\ILPcomplexity}{\ensuremath{O\parentheses{(1+f/\log n)\cdot\parentheses{\frac{\log \Delta}{ \log \log \Delta} + \parentheses{f\cdot\log M}^{1.01}\cdot \log\eps^{-1}\cdot (\log\Delta)^{0.01}}}}}
\usepackage{amsfonts}
\usepackage{amssymb}
\usepackage{url}
\fi
\usepackage[utf8]{inputenc}
\usepackage[T1]{fontenc}
\usepackage{lmodern}

\usepackage{graphicx,comment}
\usepackage[linesnumbered,noresetcount,vlined]{algorithm2e}
\usepackage{environ,tabularx}

\usepackage{xcolor}
\usepackage[inline]{enumitem}
\usepackage{mathrsfs}

\newcommand{\xmax}{M}

\newenvironment{lemma-repeat}[1]{\begin{trivlist}
\item[\hspace{\labelsep}{\bf\noindent Lemma \ref{#1} }]\em }%
{\end{trivlist}}
\newenvironment{theorem-repeat}[1]{\begin{trivlist}
\item[\hspace{\labelsep}{\bf\noindent Theorem \ref{#1} }]\em }%
{\end{trivlist}}

\newcommand*\samethanks[1][\value{footnote}]{\footnotemark[#1]}
\DeclareMathAlphabet{\mathpzc}{OT1}{pzc}{m}{it}
\ifdefined\noComments
\newcommand{\Gcomment}[1]{}
\newcommand{\Ecomment}[1]{}
\newcommand{\Rcomment}[1]{}
\else
\newcommand{\Gcomment}[1]{{\color{red} Gregory: #1}}
\newcommand{\Ecomment}[1]{{\color{red} Guy: #1}}
\newcommand{\Rcomment}[1]{{\color{green} Ran: #1}}

\fi

\newcommand{\ceil}[1]{\lceil #1 \rceil}
\newcommand{\floor}[1]{\left\lfloor #1 \right\rfloor}
\renewcommand{\epsilon}{\varepsilon}
\newcommand{\eps}{\varepsilon}
\newcommand{\RR}{\mathbb{R}}
\newcommand{\NN}{\mathbb{N}}
\newcommand{\congest}{\textsc{congest}\xspace}
\newcommand{\local}{\textsc{local}\xspace}
\newcommand{\WHVC}{\textsc{mwhvc}\xspace}

\newcommand{\MWHVC}{\textsc{mwhvc}\xspace}
\newcommand{\MWVC}{\textsc{mwvc}\xspace}
\newcommand{\Dv}{\ensuremath{\Delta}}
\newcommand{\opt}{\textsf{opt}}
\DeclareMathOperator*{\argmin}{argmin} 
\DeclareMathOperator*{\alg}{\textup{\MWHVC\xspace}} 
\DeclareMathOperator*{\deal}{deal} 
\newcommand{\parentheses}[1]{\left(#1\right)}
\newcommand{\logp}[1]{\log\parentheses{#1}}

\newcommand{\set}[1]{\left\{ #1 \right\}}

\makeatletter
\newcommand{\problemtitle}[1]{\gdef\@problemtitle{#1}}
\newcommand{\probleminput}[1]{\gdef\@probleminput{#1}}
\newcommand{\problemoutput}[1]{\gdef\@problemoutput{#1}}
\newcommand{\problemobj}[1]{\gdef\@problemobj{#1}}
\NewEnviron{problem}{
  \problemtitle{}\probleminput{}\problemoutput{}\problemobj{}
  \BODY
  \par\addvspace{.5\baselineskip}
  \noindent
  \begin{tabularx}{\textwidth}{@{\hspace{\parindent}} l X c}
    \textbf{Input:} & \@probleminput \\
    \textbf{Output:} & \@problemoutput \\
    \textbf{Objective:} & \@problemobj
  \end{tabularx}
  \par\addvspace{.5\baselineskip}
}
\makeatother

\usepackage{microtype}



\usepackage{titlesec}
\titlespacing{\section}{1.5pt}{*1.5}{*1.5}
\titlespacing{\subsection}{1pt}{*1.5}{*1.5}
\titlespacing{\subsubsection}{1pt}{*1}{*1}
\titlespacing{\paragraph}{1pt}{0.5pt}{0.5pt}[]

\ifdefined\lipics{}
\titlerunning{$(f+\epsilon)$-Approximation for Weighted Hypergraph Vertex Cover in $O\left(\log{\Delta}\right)$ Rounds}
\fi

\author{Ran Ben-Basat\thanks{Harvard University,  \texttt{ran@seas.harvard.edu}}
\and Guy Even\thanks{Tel Aviv University, \texttt{guy@eng.tau.ac.il}}
			\and Ken-ichi Kawarabayashi \thanks{NII, Japan, \texttt{k\_keniti@nii.ac.jp}, \texttt{greg@nii.ac.jp}} \and Gregory
			Schwartzman\samethanks}
\ifdefined\lipics{}
%
%
%
%
%

\subjclass{Theory of computation$\rightarrow$Distributed algorithms}

\keywords{Distributed Algorithms, Approximation Algorithms, Vertex Cover, Set~Cover}

\category{}





\EventEditors{}
\EventNoEds{4}
\EventLongTitle{}
\EventShortTitle{DISC 2018}
\EventAcronym{DISC}
\EventYear{2018}
\EventDate{}
\EventLocation{}
\EventLogo{eatcs}
\SeriesVolume{}
\ArticleNo{}
\fi


\begin{document}
\begin{titlepage}
\title{Optimal Distributed Covering Algorithms}
\maketitle

\begin{abstract}

We present a time-optimal deterministic distributed algorithm for approximating a minimum weight vertex cover in hypergraphs of rank $f$.
This problem is equivalent to the Minimum Weight Set Cover problem in which the frequency of every element is bounded by $f$.
The approximation factor of our algorithm is $(f+\epsilon)$.
Let $\Delta$ denote the maximum degree in the hypergraph.
Our algorithm runs in the \congest model and requires $O(\log{\Delta} / \log \log \Delta)$ rounds, for constants $\epsilon \in (0,1]$ and $f\in\mathbb N^+$.
This is the first distributed algorithm for this problem whose running time does not depend on the vertex weights nor the number of vertices. Thus adding another member to the exclusive family of \emph{provably optimal} distributed algorithms.

For constant values of $f$ and $\epsilon$, our algorithm improves over the $(f+\epsilon)$-approximation algorithm of \cite{KuhnMW06} whose running time is $O(\log \Delta + \log W)$, where $W$ is the ratio between the largest and smallest vertex weights in the graph. Our algorithm also achieves an $f$-approximation for the problem in $O(f\log n)$ rounds, improving over the classical result of \cite{KhullerVY94} that achieves a running time of $O(f\log^2 n)$. Finally, for weighted vertex cover ($f=2$) our algorithm achieves a \emph{deterministic} running time of $O(\log n)$, matching the \emph{randomized} previously best result of~\cite{KoufogiannakisY2011}.

We also show that integer covering-programs can be reduced to the Minimum Weight Set Cover problem in the distributed setting. This allows us to achieve an $(f+\epsilon)$-approximate integral solution in \ILPcomplexity{} rounds, where $f$ bounds the number of variables in a constraint, $\Delta$ bounds the number of constraints a variable appears in, and $\xmax=\max \set{1, \ceil{1/a_{\min}}}$, where $a_{\min}$ is the smallest normalized constraint coefficient. This 
improves over the results of \cite{KuhnMW06} for the integral case, which combined with 
rounding achieves the same guarantees in $O\parentheses{\epsilon^{-4}\cdot f^4\cdot \log f\cdot\log(M\cdot\Delta)}$ rounds. 


\end{abstract}
\thispagestyle{empty}
\end{titlepage}

\section{Introduction}
In the Minimum Weight Hypergraph Vertex Cover (\MWHVC) problem, we are given a hypergraph $G=(V,E)$ with vertex weights $w:V\to \set{1,\ldots,W}$.\footnote{Let $n\triangleq |V|$. We assume that $|E|=n^{O(1)}$ and $W=n^{O(1)}$.} The goal is to find a minimum weight \emph{cover} $U\subseteq V$ such that $\forall e\in E: e\cap U\neq\emptyset$. In this paper we develop a distributed approximation algorithm for \MWHVC in the \congest model. The approximation ratio is $f+\eps$, where $f$ denotes the rank of the hypergraph (i.e., $f$ is an upper on the size of every hyperedge). 
The \MWHVC problem is a generalization of the Minimum Weight Vertex Cover (\MWVC) problem (in which $f=2$). The \MWHVC problem is also equivalent to the Minimum Weight Set Cover Problem (the rank $f$ of the hypergraph corresponds to the maximum frequency of an element). Both  of these problems are among the classical NP-hard problems presented in~\cite{Karp72}. 

We consider the following distributed setting for 
the \MWHVC problem. The communication network is a bipartite graph $H(E\cup V,\set{\set{e,v} \mid v\in e})$. We refer to the network vertices as \emph{nodes} and network edges as \emph{links}. The nodes
of the network are the hypergraph vertices on one side and hyperedges on the other side. There is a network link between vertex $v\in V$ and hyperedge $e\in E$ iff $v\in e$. The computation is performed in synchronous rounds, where messages are sent between neighbors in the communication network. As for message size, we consider the \congest model where message sizes are bounded to $O(\log|V|)$. This is more
restrictive than the \local model \mbox{where message sizes are unbounded.}

\subsection{Related work}
We survey previous results for \MWHVC and \MWVC. A comprehensive list of previous results appears in Tables~\ref{VCtbl} and~Table~\ref{tbl}. 
\paragraph{Vertex Cover.} \quad{}
The understanding of the round complexity for distributed \MWVC has been established in two papers: a lower bound in~\cite{KuhnMW16} and
a matching upper bound in~\cite{Bar-YehudaCS17}.
Let $\Delta$ denote the maximum vertex degree in the graph $G$.
The lower bound states that any distributed constant-factor approximation algorithm requires $\Omega(\log\Delta / \log\log \Delta)$ rounds to terminate. This lower bound holds for every constant approximation ratio, for unweighted graphs and even if the message lengths are not bounded (i.e., LOCAL model)~\cite{KuhnMW16}. 
The matching upper bound is a 
$(2+\epsilon)$-approximation distributed algorithm in the \congest model, for every $\epsilon=\Omega(\log\log\Delta / \log \Delta)$.\footnote{Recently,   the range of $\epsilon$ for which the runtime is optimal was improved to $\Omega(\log^{-c}\Delta)$ for any  $c=O(1)$~\cite{BEKS18}.}
In \cite{KoufogiannakisY2011} an $O(\log n)$-round 2-approximation randomized algorithm for weighted graphs in the \congest model is given. We note that \cite{KoufogiannakisY2011} was the first to achieve this running time with no dependence on $W$, the maximum weight of the nodes.

\paragraph{Hypergraph Vertex Cover.} \quad{} For constant values of $f$, Astrand et al.~\cite{AstrandS10} present an $f$-approximation algorithm for anonymous networks whose running time is $O(\Delta^2 + \Delta\cdot\log^* W)$.
Khuller et al.~\cite{KhullerVY94} provide a solution that runs in $O(f\log 1/ \epsilon \cdot \log n)$ rounds in the \congest model for any $\epsilon > 0$ and achieves an $(f+\epsilon)$-approximation. Setting $\epsilon = 1/W$ (recall that $W=poly(n)$) results in a $f$-approximation in $O(f\log^2 n)$-rounds. For constant $\eps$ and $f$ values, Kuhn et al.~\cite{Kuhn2005,KuhnMW06} present an $(f+\epsilon)$-approximation algorithm that terminates in $O(\log \Delta + \log W)$ rounds. 

For the Minimum Cardinality Vertex Cover in Hypergraphs Problem, the lower bound was recently matched by~\cite{unweightedHVC} with an $(f+\eps)$-approximation algorithm in the \congest model.  The round complexity in~\cite{unweightedHVC} is
$O\parentheses{f/\epsilon \cdot \frac{\log(f\cdot\Delta)}{\log\log(f\cdot\Delta)}}$, which is optimal for constant $f$ and $\eps$. 
The algorithm in~\cite{unweightedHVC} and its analysis is a deterministic version of the randomized maximal independent set algorithm
of~\cite{ghaffari2016improved}.

\subsection{Our contributions}

In this paper, we present a deterministic distributed $(f+\epsilon)$-approximation algorithm for minimum weight vertex cover
in $f$-rank hypergraphs, which completes in
$$
O\parentheses{f\cdot \log(f/\eps) + \frac{\log \Delta}{ \log \log \Delta}+\min\set{\log\Delta,f\cdot \log(f/\eps)\cdot (\log\Delta)^{0.001}}}
  $$
rounds in the \congest model. 
For any constants $\epsilon \in (0,1)$ and $f\in\mathbb N^+$ this implies a running time of $O(\log{\Delta} / \log \log \Delta)$, which is optimal according to \cite{KuhnMW16}.
This is the first distributed algorithm for this problem whose round complexity does not depend on the node weights nor the number of vertices. 

Our algorithm is one of a handful of distributed algorithms for \emph{local} problems which are \emph{provably optimal} \cite{ColeV86, ChangKP16, Bar-YehudaCGS17, Bar-YehudaCS17, GhaffariS17, unweightedHVC}. Among these are the classic Cole-Vishkin algorithm \cite{ColeV86} for 3-coloring a ring, the more recent results of \cite{Bar-YehudaCGS17} and \cite{Bar-YehudaCS17} for \MWVC and Maximum Matching, and the result of \cite{unweightedHVC} for Minimum Cardinality Hypergraph Vertex Cover.

Our algorithm also achieves a \emph{deterministic} $f$-approximation for the problem in $O(f\log n)$ rounds. This improves over the best known result for hypergraphs $O(f\log^2 n)$ \cite{KhullerVY94} and matches the best known \emph{randomized} results for weighted vertex cover ($f=2$) of $O(\log n)$-rounds \cite{KoufogiannakisY2011}.

We also show that general covering Integer Linear Programs (ILPs) can be reduced to 
\MWHVC
in the distributed setting. That is,  LP constraints can be translated into hyperedges such that a cover for the hyperedges satisfies all covering constraints.
This allows us to achieve an $(f+\epsilon)$-approximate \emph{integral} solution in $\ILPcomplexity$ rounds, where $f$ bounds the number of variables in a constraint, $\Delta$ bounds the number of constraints a variable appears in, and $\xmax=\max \set{1, \ceil{1/a_{\min}}}$, where $a_{min}$ smallest normalized constraint coefficient. This significantly improves over the results of \cite{KuhnMW06} for the integral case, which combined with 
rounding achieves the same guarantees in $O\parentheses{\epsilon^{-4}\cdot f^4\cdot \log f\cdot\log(M\cdot\Delta)}$ rounds. Note that the results of \cite{KuhnMW06} also include a $(1+\epsilon)$-approximation for the fractional case, while our result only allows for an integral solution. We also note that plugging $\epsilon=1/nWM$ into our algorithm, achieves an $f$-approximation for ILPs in polylogarithmic time, a similar result cannot be achieved using \cite{KuhnMW06}.
 \begin{table*}[t!]
	\centering{
		\resizebox{1.0 \textwidth}{!}{
			\begin{tabular}{|cclll|}
				\hline
				det.& weighted& approximation& time & algorithm\\
				\hline

				yes& no& 3&$O(\Delta)$& \cite{PolishchukS09}\\

				
				yes& no& 2&$O(\Delta^{2})$& \cite{AstrandFPRSU09}\\
				
				yes& yes& 2&$O(1)$ for $\Delta\le 3$& \cite{AstrandFPRSU09}\\
				
				
				yes& yes& 2&$O(\Delta+\log^{*}n)$& \cite{PanconesiR01}\\
				
				
				yes& yes& 2&$O(\Delta + \log^* W)$& \cite{AstrandS10}\\
				
				
				yes& yes& $ 2$&$O(\log^2 n)$&\cite{KhullerVY94}\\
				
				yes& yes& 2&$O(\log n\log{\Delta}/\log^2\log{\Delta})$& \cite{BEKS18}\\
				
				no& yes& 2&$O(\log n)$& \cite{GrandoniKP08,KoufogiannakisY2011}\\
				
				\textbf{yes}& \textbf{yes}& \textbf{2}&$\bm{O(\log n)}$& \textbf{This work}\\
				\hline
				
				yes& yes& $ 2+\epsilon$&$O(\epsilon^{-4}\log(W\cdot\Delta))$& \cite{Hochbaum82, KuhnMW06}\\
				
				yes& yes& $ 2+\epsilon$&$O(\log\epsilon^{-1}\log n)$& \cite{KhullerVY94}\\
				
				yes& yes& $ 2+\epsilon$&$O(\epsilon^{-1}\log\Delta/\log\log\Delta)$& \cite{Bar-YehudaCS17,unweightedHVC}\\	
				
				yes& yes& ${2+\epsilon}$&${O\parentheses{\frac{\log \Dv}{\log\log \Dv} + {\frac{\log\epsilon^{-1}\log \Dv}{\log^2 \log \Dv}}}}$& \cite{BEKS18}\\	
				
				
				\textbf{yes}& \textbf{yes}& $\bm{2+\epsilon}$&$\bm{O\parentheses{\frac{\log \Delta}{ \log \log \Delta} + \log{}\eps^{-1}\cdot (\log{} \Delta)^{0.001}}}$& \textbf{This work}\\	
				\hline
				
				yes& yes& $ 2+\frac{\log\log\Delta}{c\cdot\log\Delta}$&$O(\log\Delta/\log\log\Delta)$& \cite{Bar-YehudaCS17}, $\forall c=O(1)$\\								

				yes& yes& ${{2+\parentheses{\log\Delta}^{-c}}}$&$O(\log\Delta/\log\log\Delta)$& \cite{BEKS18}, $\forall c=O(1)$\\								
				
				\textbf{yes}& \textbf{yes}& $\bm{2+2^{-c\cdot\parentheses{\log\Delta}^{0.99}}}$
				&$\bm{O(\log{}\Delta/{\log\log\Delta})}$& \textbf{This work, $\bm{\forall c=O(1)}$}\\	

				\hline
			\end{tabular}
		}
	}
	\caption{Previous distributed algorithms for \MWVC.
		In the table, $n=|V|$ and $\epsilon\in(0,1)$. 
		Some of the algorithms hold only for the unweighted case and some are randomized. For randomized algorithms the running times hold in expectation or with high~probability.
	}
	\label{VCtbl}
\end{table*}

 \begin{table*}[]
	\centering{
		\resizebox{1.0 \textwidth}{!}{
			\begin{tabular}{|clll|}
				\hline
				weighted& approximation& time & algorithm\\
				\hline
				
				yes& $f$&$O\parentheses{f^2\Delta^2+f\Delta\log^* W}$ & \cite{AstrandS10}\\
				
				yes& $f$&$O\parentheses{f \log^2 n}$ & \cite{KhullerVY94}\\
				
				\textbf{yes}& $\bm{f}$&$\bm{O\parentheses{f \log n}}$& \textbf{This work}\\
				\hline
				
				no& $ f+\epsilon$&$O\parentheses{\epsilon^{-1}\cdot f\cdot \frac{\log(f\Delta)}{\log\log(f\Delta)}}$& \cite{unweightedHVC}\footnotemark\\	
				
				yes& $f+\epsilon$&$O\parentheses{f\cdot \log(f/\eps)\cdot \log n}$ & \cite{KhullerVY94}\\
				
				yes& $ f+\epsilon$&$O\parentheses{\epsilon^{-4}\cdot f^4\cdot \log f\cdot\log(W\cdot\Delta)}$& \cite{KuhnMW06}\\
				
				
				\textbf{yes}& $\bm{f+\epsilon}$&$\bm{O\parentheses{f\cdot\log(f/\eps)\cdot (\log{}\Delta)^{0.001}+\frac{\log \Delta}{ \log \log \Delta}}}$& \textbf{This work}\\	
				\hline		
				
				%
				
				no& $ f+1/c$&$O\parentheses{\log\Delta/{\log\log\Delta}}$& \cite{unweightedHVC}, $\forall f,c=O(1)$\\
				
				\textbf{yes}& $\bm{f+2^{-c\cdot\parentheses{\log\Delta}^{0.99}}}$
				&$\bm{O\parentheses{\log{}\Delta/{\log\log\Delta}}}$& \textbf{This work}, $\bm{\forall f,c=O(1)}$\\
				
				
				\hline
			\end{tabular}
		}
	}
	\caption{Previous distributed algorithms for \MWHVC.
		In the table, $n=|V|$ and $\epsilon\in(0,1)$. 
		All algorithms are deterministic. Note that \cite{unweightedHVC} holds only for unweighted hypergraphs.
	}
	\label{tbl}\vspace{-4mm}
\end{table*}
\footnotetext{The authors state their result for an $f(1+\eps)$-approximation which removes the $f$ factor from the runtime.}
\subsection{Tools and techniques} 
\paragraph {The Primal-Dual schema.} \quad{}The Primal-Dual approach introduces, for every hyperedge $e \in E$, a dual variable denoted by $\delta(e)$. The dual edge packing constraints are $\forall v\in V,\sum_{v \in e} \delta(e) \leq w(v)$. If for some $\beta \in [0,1)$ it holds that $\sum_{v \in e} \delta(e) \geq (1-\beta)\cdot w(v)$, we say the $v$ is $\beta$-tight. Let $\beta=\eps/(f+\eps)$. For every feasible dual solution, the weight of the set of $\beta$-tight vertices is at most $(f+\eps)$ times the weight of an optimal (fractional) solution. The algorithm terminates when the set of $\beta$-tight edges constitutes a vertex cover.

\paragraph{The challenge.} \quad{} When designing a Primal-Dual distributed algorithms, the main challenge is in controlling the rate at which we increase the dual variables. On the one hand, we must grow them rapidly to reduce the number of communication rounds. On the other hand, we may not violate the edge packing constraints. This is tricky in the distributed environments as we have to coordinate between nodes. For example, the result of \cite{Bar-YehudaCS17} does not generalize to hypergraphs, as hyperedges require the coordination of more than two nodes in order to increment edge variables. 

\paragraph{Our algorithm.} \quad{}The algorithm proceeds in iterations, each of which requires a constant number of communication rounds. We initialize the dual variables in a "safe" way so that feasibility is guaranteed. We refer to the additive increase of the dual variable $\delta(e)$ as $\deal(e)$. 
Similarly to~\cite{BEKS18}, we use \emph{levels} to measure the progress made by a vertex. Whenever the level of a vertex increases, it sends a message about it to all incident edges, which multiply (decrease) their deals by $0.5$. 
Intuitively, the level of a vertex equals the logarithm of its uncovered portion.
Formally, we define the level of a vertex $v$ as $\ell(v)\triangleq \floor{\log \frac{w(v)}{w(v)-\sum_{e\ni v} \delta(e)}}$. That is, the initial level of $v$ is $0$
and it is increased as the dual variables of the edges incident to $v$ grow. 
The level of a vertex never reaches $z\triangleq \ceil{\log\beta^{-1}}$ as this implies that it is $\beta$-tight and entered the cover.
Loosely speaking, the algorithm  increases the increments $\deal(e)$ exponentially (multiplication by $\alpha$) provided that no vertex $v\in e$ is $(0.5^{\ell(v)} / \alpha)$-tight with respect to the deals of the previous iteration. 
Here, $\alpha\ge 2$ is a positive parameter that we determine later. 
The analysis builds on two observations: (1)~The number of times that the increment $\deal(e)$ is multiplied by $\alpha$ is bounded by $\log_\alpha \Delta$.
(2)~The number of iterations in which $\deal(e)$ is not multiplied by $\alpha$ is bounded by $O(f\cdot z\cdot \alpha)$. Loosely speaking, each such iteration means that for at least one vertex $v\in e$ the sum of deals is at least an $1/(2\alpha)$-fraction of its slack. Therefore, after at most $O(\alpha)$ such iterations that vertex will level up. Since there are $z$ levels per vertex and $f$ vertices in $e$, we have that the number of such iterations is at most $O(f\cdot z\cdot \alpha)$.
%
Hence the total number of iterations is bounded by $O(\log_\alpha \Delta+f\cdot z\cdot \alpha)$. 

\paragraph{Integer linear programs (ILPs).}\quad
We show distributed reductions that allow us to compute an $(f+\eps)$-approximation for general covering integer linear programs (ILPs). To that end, we first show that any Zero-One covering program (where all variables are binary) can be translated into a set cover instance in which the vertex degree is bounded by $2^f$ times the bound on the number of constraints each ILP variable appears in. We then generalize to arbitrary covering ILPs by replacing each variable with multiple vertices in the hypergraph, such that the value associated with the ILP variable will be the weighted sum of the vertices in the cover.

\section{Problem Formulation}
Let $G=(V,E)$ denote a hypergraph. Vertices in $V$ are equipped with positive weights $w(v)$.  For a subset $U\subseteq V$, let $w(U)\triangleq \sum_{v\in U} w(v)$.  Let $E(U)$ denote the set of hyperedges that are incident to some vertex in $U$ (i.e., $E(U)\triangleq \{ e\in E \mid e\cap U \neq \emptyset\}$).

\medskip
\noindent
The \emph{Minimum Weight Hypergraph Vertex Cover} Problem (\WHVC) is
defined as follows.
\begin{problem}
  \probleminput{Hypergraph $G=(V,E)$ with vertex weights $w(v)$.}%
  \problemoutput{A subset $C\subseteq V$ such that $E(C)=E$. }%
  \problemobj{Minimize $w(C)$.}
\end{problem}
 
The \WHVC Problem is equivalent to the Weighted Set Cover Problem.
Consider a set system $(X,\mathcal{U})$, where $X$ denotes a set of elements and $\mathcal{U}=\set{U_1,\ldots, U_m}$ denotes a collection of subsets of $X$. 
The reduction from the set system $(X,\mathcal{U})$ to a hypergraph $G=(V,E)$ proceeds as follows.
The set of vertices is $V\triangleq \set{u_1,\ldots, u_m}$ (one vertex $u_i$ per subset $U_i$).
The set of edges is $E\triangleq\set{e_x}_{x\in X}$ (one hyperedge $e_x$ per element $x$), where $e_x\triangleq \set{u_i: x\in U_i}$.
The weight of vertex $u_i$ equals the weight of the subset $U_i$.

\section{Distributed 
Approximation Algorithm for {\small MWHVC}}
\label{sec: alg}
\subsection{Input}
The input is a hypergraph $G=(V,E)$ with non-negative vertex weights $w:V\rightarrow \mathbb{N}^+$ and an approximation ratio parameter $\eps\in (0,1]$.  
We denote the number of vertices by $n$, the rank of $G$ by $f$ (i.e., each hyperedge contains at most $f$ vertices), and the maximum degree of $G$ by $\Delta$ (i.e., each vertex belongs to at most $\Delta$ edges).

\paragraph{Assumptions.} \quad{}
We assume that 
\begin{enumerate*}[label={(\roman*)}]
\item Vertex weights are polynomial in $n\triangleq|V|$ so that sending a vertex weight requires $O(\log n)$ bits. 
\item Vertex degrees are polynomial in $n$ (i.e., $|E(v)|=n^{O(1)}$) so that sending a vertex degree requires $O(\log n)$ bits. 
Since $|E(v)|\leq n^f$, this assumption trivially holds for constant $f$.
\item The maximum degree is at least $3$ so that $\log \log \Delta >0$.
\end{enumerate*}

\subsection{Output}
A vertex cover $C\subseteq V$. Namely, for every hyperedge $e\in E$, the intersection $e\cap C$ is not empty.  The set $C$ is maintained locally in the sense that every vertex $v$ knows whether it \mbox{belongs to $C$ or not.}

\subsection{Communication Network} 
The communication network $N(E\cup V,\set{\set{e,v} \mid v\in e})$ is a bipartite graph.  There are two types of nodes in the network: \emph{servers} and \emph{clients}. The set of servers is $V$ (the vertex set of $G$) and the set of clients is $E$ (the hyperedges in $G$).  There is a link $(v,e)$ from server $v\in V$ to a client $e\in E$ if $v\in e$.  We note that the degree of the clients is
bounded by $f$ and the degree of the servers is bounded by $\Delta$.

\subsection{Parameters and Variables}
\begin{itemize}[topsep=1pt,itemsep=-.5ex,partopsep=1ex,parsep=1ex]
\item The approximation factor parameter is $\eps\in(0,1]$.
The parameter $\beta$ is defined by 
  $\beta\triangleq\epsilon/(f+\eps)$, where $f$ is the rank of the hypergraph.
\item Each vertex $v$ is assigned a level $\ell(v)$ which is a nonnegative integer. 

\item We denote the dual variables at the end of iteration $i$ by
  $\delta_i(e)$ (see Appendix~\ref{sec:primal dual} for a description
  of the dual edge packing linear program).  The amount by which
  $\delta_i(e)$ is increased in iteration $i$ is denoted by
  $\deal_i(e)$. Namely, $\delta_i(e)=\sum_{j\leq i}\deal_j(e)$.

\item The parameter $\alpha\geq 2$ determines the factor by which deals are multiplied.
    %
We determine its value in the analysis in the following section.

\end{itemize}

\subsection{Notation}
\begin{itemize}
\item We say that an edge $e$ is \emph{covered} by $C$ if
  $e\cap C\neq\emptyset$.
\item Let $E(v)\triangleq \{ e\in E\mid v\in e\}$ denote the set of
  hyperedges that contain $v$.
\item For every vertex $v$, the algorithm maintains a subset
  $E'(v)\subseteq E(v)$ that consists of the uncovered hyperedges in
  $E(v)$ (i.e., $E'(v)=\{e\in E(v)\mid e\cap C=\emptyset\}$).
\end{itemize}
\subsection{Algorithm {\sc MWHVC}}
\begin{enumerate}

\item Initialization. Set $C\gets \emptyset$. For every vertex $v$, set level $\ell(v)\gets 0$ and uncovered edges $E'(v)\gets E(v)$. 
\item Iteration $i=0$. Every edge $e$ collects the  weight $w(v)$  and degree $|E(v)|$
  from every vertex $v\in e$, and sets:
  $\deal(e)=0.5 \cdot \min_{v\in e}\{ w(v)/|E(v)|$\}. The value
  $\deal(e)$ is sent to every $v\in e$. The dual variable is set to
  $\delta(e)\gets \deal(e)$.
\item For $i=1$ to $\infty$ do:
  \begin{enumerate}
   \item\label{item:tight} Check $\beta$-tightness. For every $v\not\in C$, if $\sum_{e\in E(v)} \delta(e) \ge (1-\beta)w(v)$, then 
$v$ joins the cover $C$, sends a message to every $e\in E'(v)$ that $e$ is covered, and vertex $v$ terminates.

  \item For every uncovered edge $e$, if $e$ receives a message that it is
    covered, then it tells all its vertices that $e$ is covered and terminates.
      \item For every vertex $v\notin C$, if it receives a message from $e$ that
    $e$ is covered, then $E'(v)\gets E'(v)\setminus \{e\}$.
    If $E'(v)=\emptyset$, then $v$ terminates  (without joining the cover).
 \item\label{item:increment level} Increment levels and scale deals.
  \\
For every active (that has not terminated) vertex: 
\\\hspace*{1cm} While $\sum_{e\in E(v)} \delta(e) > w(v)(1-0.5^{\ell(v)+1})$ do 
        \begin{enumerate}[leftmargin=4\parindent]
      \item $\ell(v)\gets \ell(v)+1$
      \item For every $e\in E'(v)$: $\deal(e)\gets 0.5\cdot \deal(e)$
  \end{enumerate}    

  \item \label{item:good}
    For every active vertex, if
    $\sum_{e\in E'(v)} \deal(e) \leq \frac{1}{\alpha}\cdot 0.5^{\ell(v)+1} \cdot
    w(v)$, then send the message ``raise'' to every $e\in E'(v)$; otherwise,
    send the message ``stuck'' to every $e\in E'(v)$.
  \item \label{item:update}
    For every uncovered edge $e$, 
    if \emph{all} incoming messages are ``raise''
    $\deal(e)\gets\alpha\cdot \deal(e)$. Send $\deal(e)$ to
    every $v\in e$, who updates $\delta(e)\gets \delta(e)+\deal(e)$.
  \end{enumerate}
\end{enumerate}

\paragraph*{Termination.}
\quad{}Every vertex $v$ terminates when either $v\in C$ or every edge $e\in E(v)$ is covered (i.e., $E'(v)=\emptyset$).
Every edge $e$ terminates when it is covered (i.e., $e\cap C\neq\emptyset$).  
We say that the algorithm has terminated if all the vertices and edges have terminated.
\paragraph*{Execution in \congest.}
\quad{}See Section~\ref{sec:congest} in the Appendix for a discussion of how
Algorithm \WHVC is executed in the \congest model.

\section{Algorithm Analysis}
In this section, we analyze the approximation ratio and the running time of the algorithm. Throughput the analysis, we attach an index $i$ to the variables $\deal_i(e), \delta_i(e)$ and $\ell_i(v)$.
The indexed variable refers to its value at the end of the $i$'th iteration.
\subsection{Feasibility and Approximation Ratio}
The following invariants are satisfied throughout the execution of the algorithm.
In the following claim we bound the sum of the deals
of edges incident to a vertex.
\begin{claim}\label{claim:vault}
  If $v\not\in C$ at the end of iteration $i$, then $\sum_{e\in E'(v)} \deal_{i} (e) \leq  0.5^{\ell_i(v)+1}\cdot w(v)$.
\end{claim}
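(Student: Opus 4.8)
The plan is to prove Claim~\ref{claim:vault} by induction on the iteration index $i$, tracking how the quantity $\sum_{e \in E'(v)} \deal_i(e)$ evolves under the three operations that touch it: the deal-doubling/halving triggered by level increments in Step~\ref{item:increment level}, the $\alpha$-multiplication in Step~\ref{item:update}, and the removal of edges from $E'(v)$ when they become covered. Removal of edges only decreases the left-hand side (all deals are non-negative), so it is harmless. The base case is iteration $i=0$: by the initialization, $\deal_0(e) = 0.5 \cdot \min_{u \in e}\{w(u)/|E(u)|\} \le 0.5 \cdot w(v)/|E(v)|$ for each $e \in E(v)$, so summing over the at most $|E(v)|$ edges incident to $v$ gives $\sum_{e \in E'(v)} \deal_0(e) \le 0.5 \cdot w(v) = 0.5^{\ell_0(v)+1} w(v)$ since $\ell_0(v) = 0$.

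For the inductive step, assume the bound holds at the end of iteration $i-1$, i.e.\ $\sum_{e \in E'(v)} \deal_{i-1}(e) \le 0.5^{\ell_{i-1}(v)+1} w(v)$, and suppose $v \notin C$ at the end of iteration $i$. I would walk through iteration $i$ in order. First, Step~\ref{item:increment level}: the while-loop increments $\ell(v)$ some number $k \ge 0$ of times, and each increment halves every $\deal(e)$ for $e \in E'(v)$; so after this step $\sum_{e \in E'(v)} \deal(e) \le 0.5^{k} \cdot 0.5^{\ell_{i-1}(v)+1} w(v) = 0.5^{\ell_i(v)+1} w(v)$ (using $\ell_i(v) = \ell_{i-1}(v) + k$, noting the level does not change after this step in iteration $i$). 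Next, Step~\ref{item:good}: $v$ sends ``raise'' only if at that moment $\sum_{e \in E'(v)} \deal(e) \le \frac{1}{\alpha} \cdot 0.5^{\ell_i(v)+1} w(v)$. Finally, Step~\ref{item:update}: an edge $e \in E'(v)$ has its deal multiplied by $\alpha$ only if \emph{all} its incident vertices, in particular $v$, sent ``raise''. So if $v$ sent ``stuck'', none of $v$'s edges get multiplied, and the bound $0.5^{\ell_i(v)+1} w(v)$ from the level step persists to the end of iteration $i$; if $v$ sent ``raise'', then every $e \in E'(v)$ gets multiplied by at most $\alpha$, so $\sum_{e \in E'(v)} \deal_i(e) \le \alpha \cdot \frac{1}{\alpha} \cdot 0.5^{\ell_i(v)+1} w(v) = 0.5^{\ell_i(v)+1} w(v)$. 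Either way the invariant is restored.

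The main subtlety — and the step I expect to require the most care in the writeup — is the bookkeeping around exactly \emph{when} within iteration $i$ each sub-step fires and over which set $E'(v)$ it ranges, since $E'(v)$ itself shrinks during the iteration (edges get removed in the sub-step before Step~\ref{item:increment level}). The clean way to handle this is to observe that $E'(v)$ at the time of Step~\ref{item:good} and Step~\ref{item:update} is a subset of $E'(v)$ at the time of Step~\ref{item:increment level}, so the chain of inequalities only gets easier as edges leave; and that deals of edges still in $E'(v)$ are modified in iteration $i$ only by the halvings in Step~\ref{item:increment level} and the single multiplication by $\alpha$ in Step~\ref{item:update}, in that temporal order. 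One should also note the edge case where $v$ terminates during iteration $i$ without joining $C$ (because $E'(v) = \emptyset$), in which case the sum is empty and the claim is trivial. Putting these observations together yields the invariant; no real inequality-crunching is needed beyond the geometric bookkeeping above.
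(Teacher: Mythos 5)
Your proof is correct and takes essentially the same route as the paper's: induction on $i$ with the same two-case split --- in the ``raise'' case the threshold of Step~\ref{item:good} multiplied by the $\alpha$ factor of Step~\ref{item:update}, and in the ``stuck'' case the halvings from Step~\ref{item:increment level} combined with the induction hypothesis. Your writeup merely spells out the within-iteration ordering and the shrinkage of $E'(v)$ more explicitly than the paper does.
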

\begin{proof}
  The proof is by induction on $i$.
  For $i=0$, the claim holds because  $\ell_0(v)=0$ and $\deal_0(e)\leq 0.5\cdot w(v)/|E(v)|$.
   The induction step, for $i\geq 1$, considers two cases.  
(A)~If $v$ sends a ``raise'' message in iteration $i$, then 
  Step~\ref{item:good} implies that
  $\sum_{e\in E'(v)} \deal_{i}(e)\leq 0.5^{\ell(v)+1}\cdot w(v)$, as required.
 (B)~ Suppose $v$ sends a ``stuck'' message in iteration $i$.  
 By Step~\ref{item:increment level}, $\deal_i(e)\leq 0.5^{\ell_i(e)-\ell_{i-1}(e)}\cdot \deal_{i-1}(e)$ for every $e\in E'(v)$.
 The induction hypothesis states that $\sum_{e\in E'(v)} \deal_{i-1} (e) \leq 0.5^{\ell_{i-1}(v)+1}\cdot w(v)$. The claim follows by combining these inequalities.
 \vspace{-3mm}
\end{proof}

\medskip\noindent If an edge $e$ is covered in iteration $j$, then
$e$ terminates and $\delta_i(e)$ is not set for $i\geq j$. In this case, we define $\delta_i(e)=\delta_{j-1}(e)$, namely, the last value assigned to a dual variable. 
\begin{claim}\label{claim:feasible}
For every $i\geq 1$ and
every vertex $v\notin C$ the following inequality holds:
  \begin{align}\label{eq:sandwich}
      w(v)(1-0.5^{\ell_i(v)}) \leq \sum_{e\in E(v)}\delta_{i-1}(e)\le(1-0.5^{\ell_i(v)+1})\cdot w(v) \;.
  \end{align}
  In addition
  the dual variables $\delta_i(e)$ constitute a feasible edge packing.~Namely,
  \begin{align*}
    \sum_{e\in E(v)} \delta_i(e) &\leq w(v) &\text{for every vertex $v\in V$,}\\
    \delta_i(e)&\geq 0&\text{for every edge $e\in E$.}
  \end{align*}
  
\end{claim}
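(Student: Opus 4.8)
The plan is to prove both displays by induction on $i$, exploiting the fact that Step~\ref{item:increment level} (the ``while'' loop incrementing levels and halving deals) is precisely designed to maintain the sandwich inequality~\eqref{eq:sandwich}, and that Claim~\ref{claim:vault} controls the single-iteration increase $\sum_{e\in E(v)}\deal_i(e)$. First I would observe the arithmetic identities that drive everything: $\delta_i(e)=\delta_{i-1}(e)+\deal_i(e)$, so $\sum_{e\in E(v)}\delta_i(e)=\sum_{e\in E(v)}\delta_{i-1}(e)+\sum_{e\in E(v)}\deal_i(e)$; and for an active vertex $v\notin C$ we have $E'(v)=E(v)$ up to edges already covered (for which $\deal$ is no longer raised), so $\sum_{e\in E(v)}\deal_i(e)=\sum_{e\in E'(v)}\deal_i(e)$.

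For the \emph{right-hand inequality} of~\eqref{eq:sandwich}: after the level-incrementing loop in Step~\ref{item:increment level} of iteration $i-1$ finishes, the loop's exit condition guarantees $\sum_{e\in E(v)}\delta_{i-1}(e)\le w(v)\bigl(1-0.5^{\ell_{i-1}(v)+1}\bigr)$, and since the level only increases we can replace $\ell_{i-1}(v)$ by $\ell_i(v)\ge\ell_{i-1}(v)$ only if the inequality is checked afresh — so I would instead argue that $\ell_i(v)$ is defined so that $\sum_{e\in E(v)}\delta_{i-1}(e)\le w(v)(1-0.5^{\ell_i(v)+1})$ holds by the loop's termination condition applied with the value $\ell_i(v)$ (the loop runs until exactly this holds). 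This handles the upper bound. For the \emph{left-hand inequality}: the loop increments $\ell(v)$ only while $\sum_{e\in E(v)}\delta_{i-1}(e) > w(v)(1-0.5^{\ell(v)+1})$; hence just before the last increment to reach $\ell_i(v)$ we had $\sum_{e\in E(v)}\delta_{i-1}(e) > w(v)(1-0.5^{\ell_i(v)})$, which is the desired lower bound. (If no increment happened, $\ell_i(v)=\ell_{i-1}(v)$ and one uses the inductive form of the left inequality together with monotonicity of $\delta$.) I would need to be slightly careful in the base case $i=1$, where $\ell_1(v)$ and $\delta_0(e)$ come from the initialization in iteration $0$.

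For \emph{feasibility} $\sum_{e\in E(v)}\delta_i(e)\le w(v)$: here I combine the upper bound of~\eqref{eq:sandwich} with Claim~\ref{claim:vault}. Concretely, $\sum_{e\in E(v)}\delta_i(e)=\sum_{e\in E(v)}\delta_{i-1}(e)+\sum_{e\in E'(v)}\deal_i(e)\le w(v)(1-0.5^{\ell_i(v)+1})+0.5^{\ell_i(v)+1}w(v)=w(v)$, using~\eqref{eq:sandwich} for the first term and Claim~\ref{claim:vault} for the second. The nonnegativity $\delta_i(e)\ge0$ is immediate since $\deal_0(e)\ge0$ and every update multiplies $\deal$ by a nonnegative factor ($0.5$ or $\alpha$), so all deals stay nonnegative and $\delta_i(e)$ is a sum of them. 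One subtlety to address: vertices in $C$ and edges that have terminated — for those, the convention $\delta_i(e)=\delta_{j-1}(e)$ (stated just before the claim) means the relevant sums are frozen, and feasibility at a vertex $v\in C$ is not claimed, while feasibility at an active $v\notin C$ only involves edges whose $\delta$ is either still live or frozen at a value that was feasible when frozen.

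The main obstacle I expect is bookkeeping around the level-incrementing \emph{while} loop: one must argue that~\eqref{eq:sandwich} holds with the \emph{post-loop} value $\ell_i(v)$ rather than some intermediate level, and simultaneously that the loop terminates (which follows because each iteration adds at most a bounded deal by Claim~\ref{claim:vault}, so $\ell(v)$ cannot exceed $z=\ceil{\log\beta^{-1}}$ before $v$ becomes $\beta$-tight and leaves). Threading the induction so that Claim~\ref{claim:vault} (which is about $\deal_i$, proven independently) and Claim~\ref{claim:feasible} (about $\delta_i$) fit together without circularity is the delicate part; I would state the induction hypothesis for iteration $i$ as the conjunction of~\eqref{eq:sandwich} and feasibility, and invoke Claim~\ref{claim:vault} as an already-established external fact.
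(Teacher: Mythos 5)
Your proposal is correct and follows essentially the same route as the paper: induction on $i$, with the sandwich inequality~\eqref{eq:sandwich} extracted from the entry/exit conditions of the level-incrementing loop in Step~\ref{item:increment level}, and feasibility obtained by adding the upper bound of~\eqref{eq:sandwich} to the deal bound of Claim~\ref{claim:vault} (which is indeed proved by its own independent induction, so there is no circularity). The paper merely words the loop argument more tersely and shifts the feasibility index to $\delta_{i-1}$ within the step for $i$, but the decomposition and the key inequality $\bigl(1-0.5^{\ell(v)+1}\bigr)w(v)+0.5^{\ell(v)+1}w(v)=w(v)$ are identical to yours.
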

\begin{proof}
We prove the claim by induction on the iteration number $i$.
To simplify the proof, we reformulate the statement of the feasibility of the dual variables to $i-1$.
We first prove the induction basis for $i=1$. 

\paragraph{Proof of Eq.~\ref{eq:sandwich} for $i=1$.}~
Fix a vertex $v$.
At the end of iteration $0$, $\ell_0(v)=0$ and $0<\deal_0(e)\leq w(v)/(2|E(v)|)$, for every $e\in E(v)$.
Hence $0<\sum_{e\in E(v)} \deal_0(e) \leq w(v)/2$. Because $\delta_0(e)=\deal_0(e)$, the condition in Step~\ref{item:increment level} does not hold, and $\ell_1(v)=\ell_0(v)=0$.  We conclude that Eq.~\ref{eq:sandwich} holds for $i=1$. 
\paragraph{Proof of feasibility of $\delta_0(e)$ for $i=1$.}~
Non-negativity follows from the fact that $\delta_0(e)=\deal_0(e)>0$.
The packing constraint for vertex $v$ is satisfied because $\sum_{e\in E(v)}\deal_0(e) \leq w(v)/2$.
This completes the proof of the induction basis. 

\medskip\noindent
We now prove the induction step assuming that Eq.~\ref{eq:sandwich} holds for $i-1$.
\paragraph{Proof of Eq.~\ref{eq:sandwich} for $i>1$.}~
Since $v$ is not in the cover it is also not $\beta$-tight.
Step~\ref{item:increment level} in iteration $i$ increases $\ell(v)$ until Eq.~\ref{eq:sandwich} holds for $i$.

\paragraph{Proof of feasibility of $\delta_{i-1}(e)$ for $i>1$.}~
Consider a vertex $v$. If $v$ joins $C$ in iteration $i-1$, then $\delta_{i-1}=\delta_{i-2}$, and the packing constraint of $v$ holds by the induction hypothesis. If $v\notin C$, then by $\delta_{i-1}(e)=\delta_{i-2}(e)+\deal_{i-1}(e)$, Claim~\ref{claim:vault}, and the induction hypothesis for Eq.~\ref{eq:sandwich}, we have
\begin{align*}
    \sum_{e\in E(v)} \delta_{i-1}(e) &= \sum_{e\in E(v)} \parentheses{\delta_{i-2}(e)+\textstyle{\deal_{i-1}(v)}}
    &\leq \parentheses{1-0.5^{\ell_{i-1}(v)+1}+0.5^{\ell_{i-1}(v)+1}}\cdot w(v)=w(v)\;.\qquad{}\qedhere
\end{align*}
  \end{proof}

\medskip\noindent  Let $\opt$ denote the cost of an optimal (fractional) weighted
  vertex cover of $G$.
  \begin{corollary}
Upon termination, the approximation ratio of Algorithm \WHVC is $f+\eps$.
\end{corollary}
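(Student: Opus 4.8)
The plan is to derive the approximation guarantee directly from the primal-dual framework set up in the excerpt, using the two key claims already proven. The corollary follows from three facts: (i) the algorithm terminates with a genuine cover, (ii) the final dual $\delta$ is feasible for the edge-packing LP, and (iii) every vertex that joins $C$ is $\beta$-tight, so its weight is charged to the dual at rate $(1-\beta)^{-1}$, while each dual variable $\delta(e)$ is counted at most $f$ times (once per vertex in $e$).

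First I would argue that the set $C$ returned is a vertex cover. An edge $e$ terminates only when some vertex in $e$ joins $C$; so I must rule out an edge surviving forever without being covered. For this I would invoke the bound on levels: let $z \triangleq \ceil{\log \beta^{-1}}$, and observe (as sketched in the commented-out claim) that by Eq.~\ref{eq:sandwich}, if $\ell_i(v) \ge z$ then $\sum_{e \in E(v)} \delta_{i-1}(e) \ge (1 - 2^{-z}) w(v) \ge (1-\beta) w(v)$, forcing $v$ into $C$ in Step~\ref{item:tight}. Hence every vertex's level stays below $z$. Combining this with the progress argument from the ``Our algorithm'' discussion — each iteration either multiplies $\deal(e)$ by $\alpha$ (at most $\log_\alpha \Delta$ times before feasibility would be violated) or leaves some incident vertex closer to leveling up (at most $O(f z \alpha)$ such iterations total over the $f$ vertices and $z$ levels of $e$) — gives a finite bound on the number of iterations in which $e$ is still active, so $e$ must get covered. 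Thus $E(C) = E$.

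Next I would establish the cost bound. By Claim~\ref{claim:feasible}, upon termination $\delta(e) \ge 0$ for all $e$ and $\sum_{e \in E(v)} \delta(e) \le w(v)$ for all $v$, so $\sum_e \delta(e)$ is a feasible solution to the dual edge-packing LP and therefore $\sum_{e \in E} \delta(e) \le \opt$ by LP duality (the dual packing LP is the dual of the fractional vertex cover LP). Now every $v \in C$ joined because it became $\beta$-tight, i.e.\ $\sum_{e \in E(v)} \delta(e) \ge (1-\beta) w(v)$, equivalently $w(v) \le (1-\beta)^{-1} \sum_{e \in E(v)} \delta(e)$. Summing over $v \in C$ and swapping the order of summation,
\begin{align*}
w(C) = \sum_{v \in C} w(v) \le \frac{1}{1-\beta} \sum_{v \in C} \sum_{e \in E(v)} \delta(e) = \frac{1}{1-\beta} \sum_{e \in E} \size{\{v \in C : v \in e\}} \cdot \delta(e) \le \frac{f}{1-\beta} \sum_{e \in E} \delta(e) \le \frac{f}{1-\beta}\, \opt,
\end{align*}
using that each edge has at most $f$ vertices. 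Finally, plugging in $\beta = \eps/(f+\eps)$ gives $1-\beta = f/(f+\eps)$, so $f/(1-\beta) = f+\eps$, and hence $w(C) \le (f+\eps)\,\opt$, which is the claimed approximation ratio.

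I expect the main obstacle to be the termination/coverage part rather than the cost bound: the cost bound is the standard primal-dual charging argument and follows almost mechanically from Claim~\ref{claim:feasible}, whereas showing that every edge is eventually covered requires carefully combining the level bound (no vertex exceeds level $z$) with the quantitative progress made in each iteration — distinguishing ``raise'' iterations (geometric growth of $\deal(e)$, capped by feasibility) from ``stuck'' iterations (a charged vertex advances toward its next level) — to conclude the iteration count is finite. One subtlety to handle with care is the definition of $\delta_i(e)$ for edges that have already terminated (frozen at the last assigned value), ensuring the feasibility statement and the $\beta$-tightness of covered vertices are both phrased consistently at the moment of termination.
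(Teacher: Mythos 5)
Your proof is correct and follows essentially the same route as the paper: the cost bound is exactly the primal--dual charging argument of Claim~\ref{claim:ratio PCWHVC} (weak duality, $\beta$-tightness of every $v\in C$, and each $\delta(e)$ counted at most $f$ times, with $f/(1-\beta)=f+\eps$), combined with the feasibility guarantee of Claim~\ref{claim:feasible}. The extra termination/coverage discussion is not needed for the corollary as stated (``upon termination'' is part of the hypothesis, and finiteness is established separately in the round-complexity analysis), but it does no harm.
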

\begin{proof}
  Throughout the algorithm, the set $C$ consists of $\beta$-tight
  vertices. By Claim~\ref{claim:ratio PCWHVC}, $w(C)\leq (f+\eps)\cdot \opt$. Upon
  termination, $C$ constitutes a vertex cover, and the corollary
  follows.
\end{proof}

\subsection{Communication Rounds Analysis}\label{sec:runtime}
In this section, we prove that the number of communication rounds of
Algorithm~\MWHVC{}\ is bounded~by (where $\gamma>0$ is a constant, e.g., $\gamma=0.001$)
{\small
$$O\parentheses{f\cdot \log(f/\eps) + \frac{\log \Delta}{\gamma\cdot \log \log \Delta}+\min\set{\log\Delta,f\cdot \log(f/\eps)\cdot (\log\Delta)^{\gamma}}}.$$}
It suffices to bound the number of iterations because each iteration consists of a constant number of communication rounds.

\medskip
\noindent
Let $z\triangleq \ceil{\log_2 {\frac{1}{\beta}}}$. Note that $z=O\parentheses{\log ({f}/{\eps})}$.
\begin{claim}\label{claim:levelLessThanZ}
The level of every vertex is always less than $z$.
\end{claim}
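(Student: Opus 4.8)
The plan is to show that no vertex ever has its level incremented to $z$; since every level starts at $0$ and changes only inside Step~\ref{item:increment level} (by unit increments), this is equivalent to the claim. First I would record the trivial fact that $z\ge 1$: as $\beta=\eps/(f+\eps)\in(0,1)$ we have $\log_2(1/\beta)>0$, hence $z=\ceil{\log_2(1/\beta)}\ge 1$, so the initial level $\ell_0(v)=0$ is already below $z$.

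For the inductive step, suppose for contradiction that in some iteration $i$ an active vertex $v$ has its level raised from $z-1$ to $z$ during the while-loop of Step~\ref{item:increment level}. That loop continues only while $\sum_{e\in E(v)}\delta(e)>w(v)(1-0.5^{\ell(v)+1})$, and at the moment $\ell(v)=z-1$ the dual variables still equal $\delta_{i-1}(e)$ (deals are added into the $\delta$'s only later, in Step~\ref{item:update}). Hence $\sum_{e\in E(v)}\delta_{i-1}(e)>w(v)(1-0.5^{z})$. Now I would invoke the defining property of $z$: the inequality $z\ge \log_2(1/\beta)$ is exactly equivalent to $0.5^{z}=2^{-z}\le\beta$, so $w(v)(1-0.5^{z})\ge(1-\beta)w(v)$ and therefore $\sum_{e\in E(v)}\delta_{i-1}(e)>(1-\beta)w(v)$. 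But the $\beta$-tightness test in Step~\ref{item:tight} of the same iteration $i$ reads precisely these values $\delta_{i-1}(e)$, and $v$ survived it (it is still active at Step~\ref{item:increment level}), so $\sum_{e\in E(v)}\delta_{i-1}(e)<(1-\beta)w(v)$ --- a contradiction. Thus the loop never pushes any level to $z$, and the claim follows.

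An equivalent and slightly cleaner route is to piggyback on Claim~\ref{claim:feasible}: for $v\notin C$ at the end of iteration $i$ the left inequality of Eq.~\ref{eq:sandwich} gives $w(v)(1-0.5^{\ell_i(v)})\le\sum_{e\in E(v)}\delta_{i-1}(e)$, while $v$ having passed the test of Step~\ref{item:tight} in iteration $i$ gives $\sum_{e\in E(v)}\delta_{i-1}(e)<(1-\beta)w(v)$; chaining yields $0.5^{\ell_i(v)}>\beta$, i.e. $\ell_i(v)<\log_2(1/\beta)\le z$, and as $\ell_i(v)$ is an integer, $\ell_i(v)\le z-1$. A vertex that joins $C$ in iteration $i$ has its level frozen at $\ell_{i-1}(v)$, which is $<z$ by the claim at index $i-1$. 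I expect the only delicate point to be the bookkeeping of the order of operations within an iteration --- namely that the level-raising in Step~\ref{item:increment level} and the tightness test in Step~\ref{item:tight} both operate on the same dual vector $\delta_{i-1}$ --- together with the (harmless) use of the ceiling, which is exactly what guarantees $0.5^{z}\le\beta$; everything else is a one-line computation.
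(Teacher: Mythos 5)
Your proposal is correct and its second route is essentially identical to the paper's own proof: the paper likewise combines the left inequality of Eq.~\ref{eq:sandwich} (equivalently, the while-loop condition in Step~\ref{item:increment level}) with the fact that a vertex satisfying $\sum_{e\in E(v)}\delta_{i-1}(e)\geq(1-\beta)w(v)$ would already have joined $C$ in Step~\ref{item:tight}, using $2^{-z}\le\beta$. Your first, more operational phrasing via the loop condition is the same argument and needs no further justification.
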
 
\begin{proof}
Assume that $\ell(v)\geq z$.
By Eq.~\ref{eq:sandwich}, $\sum_{e\in E(v)} \delta_{i-1}(e) \geq w(v)\cdot (1-2^{-z})\geq (1-\beta)\cdot w(v)$.
This implies that $v$ is $\beta$-tight and joins the cover in Line~\ref{item:tight} before $\ell(v)$ reaches $z$.
\end{proof}

\subsubsection{Raise or Stuck Iterations}

\begin{definition}[$e$-raise and $v$-stuck iterations]
  An iteration $i\geq 1$ is an $e$-\emph{raise} iteration if in Line~\ref{item:update} we multiplied $\deal(e)$ by $\alpha$.
  An iteration $i\geq 1$ is a $v$-\emph{stuck} iteration if $v$ sent the message ``stuck'' in iteration $i$.
\end{definition}
Note that if iteration $i$ is a $v$-stuck iteration and $v\in e$, then $\deal_i(e)\le \deal_{i-1}(e)$ and $i$ is not an $e$-raise iteration.
\medskip\noindent
We bound the number of $e$-raise iterations as~follows.
\begin{lemma}\label{lemma:regular passes}
The number of $e$-raise iterations is bounded by $\log_\alpha ( \Delta \cdot 2^{f\cdot z})$.
\end{lemma}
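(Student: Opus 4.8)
The plan is to track how the value of $\deal(e)$ evolves over the course of the execution and to exploit the fact that the total increase in the dual variable $\delta(e)$ that it drives is bounded, because $\delta_i(e)$ never exceeds $w(v)$ for any $v\in e$ (Claim~\ref{claim:feasible}), and in fact a cleaner bound comes from comparing $\deal(e)$ at the end to $\deal(e)$ at iteration $0$. First I would set up the bookkeeping: let $a_i \triangleq \deal_i(e)$ be the deal at the end of iteration $i$, and recall $a_0 = 0.5\cdot\min_{v\in e}\{w(v)/|E(v)|\} \ge (\text{something})/(2\Delta)$ after normalizing by the minimizing vertex's weight — the key point is that $a_0$ is bounded below by $w_{\min}/(2\Delta)$ where $w_{\min}=\min_{v\in e}w(v)$ (taking $w(v)\ge 1$, $a_0 \ge 1/(2\Delta)$). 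Each iteration multiplies $a$ by $\alpha$ (on an $e$-raise iteration), by $0.5$ once for each level-up of some $v\in e$ during Step~\ref{item:increment level}, or leaves it unchanged/shrinks it (on a $v$-stuck iteration that is not a level-up). So if there are $R$ $e$-raise iterations and the vertices of $e$ level up a total of $L$ times over the whole execution, then the final deal satisfies $a_{\text{final}} \ge a_0 \cdot \alpha^{R} \cdot 0.5^{L}$, since on a raise iteration the deal is multiplied by $\alpha$ and on any iteration it is multiplied by at least $0.5^{(\ell_i(v)-\ell_{i-1}(v))\text{ summed over }v\in e}$.

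Next I would bound $a_{\text{final}}$ from above and $L$ from above. For the upper bound on $a_{\text{final}}$: by Claim~\ref{claim:vault}, at the end of any iteration $i$ with $v\notin C$ we have $\deal_i(e) \le \sum_{e'\in E'(v)}\deal_i(e') \le 0.5^{\ell_i(v)+1}w(v) \le w(v)/2 \le w_{\min}/2$ — wait, I need $w(v)$ for the specific $v$, so more carefully $\deal_i(e)\le w(v)/2$ for every $v\in e$, hence $\deal_i(e)\le w_{\min}/2$. Combined with $a_0 \ge w_{\min}/(2\Delta)$ (using the minimizing vertex), this gives $\alpha^{R}\cdot 0.5^{L} \le a_{\text{final}}/a_0 \le \Delta$. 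For $L$: by Claim~\ref{claim:levelLessThanZ} every vertex's level stays below $z$, so each of the $\le f$ vertices in $e$ levels up fewer than $z$ times, giving $L < f\cdot z$. Therefore $\alpha^{R} \le \Delta \cdot 0.5^{-L} \le \Delta\cdot 2^{f\cdot z}$, and taking $\log_\alpha$ yields $R \le \log_\alpha(\Delta\cdot 2^{f\cdot z})$, which is exactly the claimed bound.

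The main obstacle I expect is getting the telescoping/multiplicative bound exactly right across all three types of iterations simultaneously — in particular, being careful that (i) on a $v$-stuck iteration the deal is multiplied only by the level-scaling factor $0.5^{\ell_i(v)-\ell_{i-1}(v)}$ (which may be $1$ if no level-up happened) and never by $\alpha$, as noted in the remark following the definition of stuck iterations, and (ii) on an $e$-raise iteration the deal may \emph{also} be scaled down by level-ups that occur in Step~\ref{item:increment level} of that same iteration before the multiplication by $\alpha$ in Step~\ref{item:update}, so the net factor on a raise iteration is $\alpha\cdot 0.5^{(\text{level-ups in }e\text{ this iteration})}$. Either way the inequality $a_{\text{final}} \ge a_0\cdot\alpha^{R}\cdot 0.5^{L}$ survives because the total number of halvings across \emph{all} iterations is exactly $L < f\cdot z$, and raises only help. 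A secondary subtlety is handling edges that get covered mid-execution: once $e$ is covered its deal is frozen, so we apply the argument up to the last iteration in which $\deal(e)$ was updated, and $R$ counts only $e$-raise iterations, which by definition occur while $e$ is uncovered. I would state these points explicitly and then the inequality chain closes the proof.
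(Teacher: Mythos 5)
Your proof is correct and is essentially the paper's own argument: both bound the ratio $\deal_{\text{final}}(e)/\deal_0(e)$ from above by $\Delta$ (lower-bounding the initial deal via the minimum normalized weight and upper-bounding every later deal via Claim~\ref{claim:vault}), and then observe that this ratio is at least $\alpha^{R}\cdot 0.5^{L}$ with $L< f\cdot z$ halvings by Claim~\ref{claim:levelLessThanZ}. The only cosmetic difference is that you normalize by $w_{\min}$ while the paper tracks the specific argmin vertex $v^*$; the extra care you take about the within-iteration order of halving versus multiplication and about frozen deals after coverage is implicit in the paper's shorter writeup.
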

\begin{proof}
  Let $v^*$ denote a vertex with minimum normalized weight in $e$ (i.e., $v^*\in \text{argmin}_{v\in e}\{ w(v)/|E(v)|$\}).
  The first deal satisfies
  $\deal_0(e)= 0.5\cdot w(v^*)/|E(v^*)| \geq 0.5\cdot
  w(v^*)/\Delta$.  By Claim~\ref{claim:vault},
   $\deal_i(e)\leq 0.5\cdot w(v^*)$. The deal is multiplied by
  $\alpha$ in each $e$-raise iteration and is halved at most $f \cdot z$ times. The bound on the number of halvings holds because the number of vertices in the edge is bounded by $f$, and each vertex halves the deal each time its level is incremented. The lemma follows.
\end{proof}
\medskip\noindent
We bound the number of $v$-stuck iterations as~follows.
\begin{lemma}\label{lemma:good}
For every vertex $v$ and level $\ell(v)$, the number of $v$-stuck iterations is bounded by $\alpha$.
\end{lemma}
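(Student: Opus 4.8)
The plan is to fix a vertex $v$ and a value $L$ of its level, and to argue that the deals incident to $v$ cannot stay "stuck" at level $L$ for more than $\alpha$ iterations. First I would recall the two relevant thresholds. Whenever $v$ sends ``stuck'' in iteration $i$ while at level $\ell_i(v)=L$, Step~\ref{item:good} tells us that $\sum_{e\in E'(v)}\deal_i(e) > \tfrac{1}{\alpha}\cdot 0.5^{L+1}\cdot w(v)$; and as long as $v$ has not leveled up past $L$, Eq.~\ref{eq:sandwich} (Claim~\ref{claim:feasible}) guarantees $\sum_{e\in E(v)}\delta_{i}(e) \le (1-0.5^{L+1})\cdot w(v)$, i.e. the slack $w(v)-\sum_{e\in E(v)}\delta_i(e)$ is at least $0.5^{L+1}w(v)$. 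The key point is that in a $v$-stuck iteration no deal of an edge in $E'(v)$ is multiplied by $\alpha$ (a stuck message prevents the raise), and levels are non-decreasing, so the deals $\deal_i(e)$ for $e\in E'(v)$ do not shrink between consecutive iterations that both have $v$ at level $L$. Consequently, during a run of $v$-stuck iterations at level $L$, each such iteration adds at least $\tfrac{1}{\alpha}\cdot 0.5^{L+1}\cdot w(v)$ to $\sum_{e\in E(v)}\delta(e)$.

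The main step is then a simple potential/accounting argument. Let $i_1<i_2<\dots<i_k$ be the $v$-stuck iterations during which $v$ is at level exactly $L$. I claim $k\le\alpha$. Indeed, just before iteration $i_1$ (more precisely, after the level increments of Step~\ref{item:increment level} in that iteration, which is when $v$ first sits at level $L$ without leveling up further), Eq.~\ref{eq:sandwich} gives $\sum_{e\in E(v)}\delta(e)\ge w(v)(1-0.5^{L})$; and at all times while $v$ is at level $L$ we have $\sum_{e\in E(v)}\delta(e)\le w(v)(1-0.5^{L+1})$. Hence the total increase of $\sum_{e\in E(v)}\delta(e)$ over all these iterations is at most $w(v)\bigl((1-0.5^{L+1})-(1-0.5^{L})\bigr)=0.5^{L+1}w(v)$. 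On the other hand each of the $k$ stuck iterations contributes an increase of more than $\tfrac{1}{\alpha}\cdot0.5^{L+1}w(v)$ (using that the deals in $E'(v)$ are not decreased between these iterations, so the bound $\sum_{e\in E'(v)}\deal(e) > \tfrac1\alpha 0.5^{L+1}w(v)$ at iteration $i_j$ really is realized as added weight in Step~\ref{item:update} of that iteration). Combining, $k\cdot\tfrac{1}{\alpha}\cdot0.5^{L+1}w(v) < 0.5^{L+1}w(v)$, so $k<\alpha$, hence at most $\alpha$.

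The step I expect to require the most care is the bookkeeping of \emph{when} the deal weight $\sum_{e\in E'(v)}\deal(e)$ that is measured in Step~\ref{item:good} of iteration $i_j$ actually gets added to $\sum_{e\in E(v)}\delta(e)$. In a $v$-stuck iteration $v$ itself sends ``stuck'', but $\delta(e)$ is updated in Step~\ref{item:update} only for edges $e$ all of whose endpoints said ``raise''; an edge $e\in E'(v)$ with $v$ stuck will \emph{not} have its $\delta(e)$ incremented in iteration $i_j$. So the clean statement is rather: the deals currently held (which, being un-raised and un-halved at level $L$, are at least as large as in the last iteration) were added to the duals in the previous raise iteration for that edge; more robustly, one should phrase the argument in terms of the monotone quantity $\sum_{e\in E'(v)}\deal_i(e)$ itself: while $v$ is stuck at level $L$ this quantity is non-decreasing and bounded above — in fact by Claim~\ref{claim:vault} it is at most $0.5^{L+1}w(v)$ — yet Step~\ref{item:good} asserts it exceeds $\tfrac1\alpha 0.5^{L+1}w(v)$ in each stuck iteration, and one shows it must strictly increase by a $(1+1/\alpha)$-type factor, or alternatively track the companion sum $\sum_{e\in E(v)}\delta(e)$ as above via the dual increments that \emph{do} occur. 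Either way the arithmetic collapses to $k\le\alpha$; I would write it using the $\sum_{e\in E(v)}\delta(e)$ potential since Eq.~\ref{eq:sandwich} gives both the needed lower and upper bounds at level $L$ directly.
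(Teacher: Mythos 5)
Your argument is essentially the paper's own proof: you track the potential $\sum_{e\in E(v)}\delta(e)$, which Eq.~\ref{eq:sandwich} pins between $w(v)(1-0.5^{L})$ (when level $L$ is reached) and $w(v)(1-0.5^{L+1})$ (beyond which the level increments), and you charge each $v$-stuck iteration more than $\tfrac{1}{\alpha}\cdot 0.5^{L+1}\cdot w(v)$ of that growth, giving at most $\alpha$ such iterations per level. The bookkeeping worry you flag at the end dissolves on a closer reading of Step~\ref{item:update}: the all-``raise'' condition gates only the multiplication of $\deal(e)$ by $\alpha$, while the update $\delta(e)\gets\delta(e)+\deal(e)$ is performed for every uncovered edge in every iteration (this is exactly the stated identity $\delta_i(e)=\sum_{j\leq i}\deal_j(e)$), so a $v$-stuck iteration really does add $\sum_{e\in E'(v)}\deal(e)>\tfrac{1}{\alpha}\cdot 0.5^{\ell(v)+1}\cdot w(v)$ to the potential, with no need to defer the charge to an earlier raise iteration.
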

\begin{proof}
Notice that when $v$ reached the level $\ell(v)$, we had $\sum_{e\in E(v)} \delta(e) \ge w(v)(1-0.5^{\ell(v)})$.
%
The number of $v$-stuck iterations is then bounded by the number of times it can send a "stuck" message without reaching $\sum_{e\in E(v)} \delta(e) > w(v)(1-0.5^{\ell(v)+1})$. Indeed, once this inequality holds, the level of $v$ is incremented.
Every stuck iteration implies, by Line~\ref{item:good}, that $\sum_{e\in E'(v)} \deal(e) > \frac{1}{\alpha}\cdot 0.5^{\ell(v)+1} \cdot w(v)$. 
Therefore, we can bound the number of iteration by 
$
    \frac{w(v)(1-0.5^{\ell(v)+1}) - w(v)(1-0.5^{\ell(v)})}{  \frac{1}{\alpha}\cdot 0.5^{\ell(v)+1} \cdot w(v)} = \alpha.
$\qedhere

\end{proof}
\medskip\noindent

\subsubsection{Putting it Together}
\begin{theorem}\label{thm:iterations}
For every $\alpha\geq 2$, the number of iterations of Algorithm~$\alg$
is
{\small $$O\parentheses{\log_\alpha \Delta + f\cdot z\cdot \alpha}=O\parentheses{\log_\alpha \Delta + f\cdot \log(f/\eps)\cdot \alpha}.$$}
\end{theorem}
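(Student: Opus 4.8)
The plan is to combine the two bounds already established -- Lemma~\ref{lemma:regular passes} on the number of $e$-raise iterations and Lemma~\ref{lemma:good} on the number of $v$-stuck iterations -- over a single representative edge $e$, and then observe that the algorithm terminates once some representative edge is covered. First I would fix an arbitrary edge $e$ that is still uncovered and argue that the total number of iterations the algorithm runs is at most the number of iterations in which \emph{something happens at $e$}: every iteration $i \geq 1$ is either an $e$-raise iteration (we multiplied $\deal(e)$ by $\alpha$ in Line~\ref{item:update}), or else at least one vertex $v \in e$ sent the message ``stuck'' in iteration $i$ (this is exactly why the multiplication did not happen), i.e.\ it is a $v$-stuck iteration for some $v \in e$. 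So the iteration count is bounded by (number of $e$-raise iterations) $+\sum_{v \in e}$ (number of $v$-stuck iterations).

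Next I would plug in the two lemmas. By Lemma~\ref{lemma:regular passes}, the number of $e$-raise iterations is at most $\log_\alpha(\Delta \cdot 2^{f z}) = \log_\alpha \Delta + f z \cdot \log_\alpha 2 = \log_\alpha \Delta + O(f z)$, using $\alpha \geq 2$ so that $\log_\alpha 2 \leq 1$. For the stuck iterations: by Lemma~\ref{lemma:good}, for a fixed vertex $v$ and a fixed level, there are at most $\alpha$ stuck iterations; by Claim~\ref{claim:levelLessThanZ} the level of $v$ is always less than $z$, so $v$ passes through at most $z$ levels, giving at most $z \alpha$ stuck iterations per vertex. Summing over the at most $f$ vertices of $e$ yields at most $f z \alpha$ stuck iterations total. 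Adding the two contributions gives $\log_\alpha \Delta + O(f z) + f z \alpha = O(\log_\alpha \Delta + f z \alpha)$, since $\alpha \geq 2$ makes the $O(fz)$ term subsumed by $fz\alpha$. Finally substitute $z = O(\log(f/\eps))$ from the remark preceding Claim~\ref{claim:levelLessThanZ} to get the second form of the bound.

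The one genuinely delicate point -- and the step I would be most careful about -- is the very first one: justifying that bounding the number of ``active'' iterations at a \emph{single} edge $e$ bounds the running time of the whole algorithm. The cleanest way is to note that the algorithm has not terminated only if there remains some uncovered edge $e$; as long as $e$ is uncovered, every iteration is accounted for at $e$ as either an $e$-raise or a $v$-stuck ($v \in e$) iteration, so the number of iterations elapsed is bounded as above; hence after $O(\log_\alpha \Delta + f z \alpha)$ iterations no uncovered edge can survive, and the algorithm has terminated. One should also double-check the bookkeeping for iteration $i=0$, which is a single extra iteration and does not affect the asymptotics. Everything else is routine arithmetic with logarithms.
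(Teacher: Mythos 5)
Your proposal is correct and follows essentially the same route as the paper: fix an uncovered edge $e$, observe that every iteration is either an $e$-raise iteration or a $v$-stuck iteration for some $v\in e$, and then invoke Lemma~\ref{lemma:regular passes} and Lemma~\ref{lemma:good} (together with Claim~\ref{claim:levelLessThanZ} bounding the number of levels by $z$). Your extra care in justifying why a per-edge bound controls the global iteration count, and in unfolding $\log_\alpha(\Delta\cdot 2^{fz})$ using $\alpha\ge 2$, only makes explicit what the paper leaves implicit.
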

\begin{proof}
Fix an edge $e$. We bound the number of iterations until $e$ is covered. Every iteration is either an $e$-raise iteration or a $v$-stuck iteration for some $v\in e$. 
Since $e$ contains at most $f$ vertices, we conclude that the number of iterations is bounded by the number of $e$-stuck iterations plus the sum over $v\in e$ of the number of $v$-stuck iterations.  The theorem follows from Lemmas~\ref{lemma:regular passes} and~\ref{lemma:good}.
\end{proof}

In Theorem~\ref{thm:running time} we assume that all the vertices know the maximum degree $\Delta$ and that $\Delta\ge 3$.  The assumption that the maximal degree $\Delta$ is known to all vertices is not required. Instead, each hyperedge $e$
can compute a local maximum degree $\Delta(e)$, where
$\Delta(e)\triangleq\max_{u\in e}|E(u)|$.  The local maximum degree
$\Delta(e)$ can be used instead of $\Delta$ to define local value
of the multiplier $\alpha=\alpha(e)$.
Let $T(f,\Delta,\eps)$ denote the round complexity of Algorithm \WHVC.
By setting $\alpha$ appropriately, we bound the running time as follows.
\begin{theorem}\label{thm:running time}\footnote{The statement of the theorem is asymptotic. This means that for every constant $\gamma$, it holds that either $\log^{\gamma/2} \Delta \geq \log \log \Delta$ or $\Delta$ is bounded by constant (determined by $\gamma$)  in which case expressions involving $\Delta$ can be omitted from the asymptotic expression. }
Let $\gamma>0$ denote a constant 
and 
{\small
\begin{align*}
    \alpha&=
\begin{cases}
\max\parentheses{2,\frac{\log \Delta}{f\cdot \log(f/\eps)\cdot \log \log \Delta}}&\mbox{if $\frac{\log \Delta}{f\cdot \log(f/\eps)\cdot \log \log \Delta}\ge (\log\Delta)^{\gamma/2}$}\\
2&\mbox{Otherwise.}
\end{cases}
\end{align*}
}
Then,  the round complexity of Algorithm \WHVC satisfies:
 {\small 
 $$T(f,\Delta,\eps) = O\parentheses{f\cdot \log(f/\eps) + \frac{\log \Delta}{ 
 \log \log \Delta}+\min\set{\log\Delta,f\cdot \log(f/\eps)\cdot (\log\Delta)^{\gamma}}}.$$}
\end{theorem}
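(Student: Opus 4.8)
The plan is to derive the stated bound purely from Theorem~\ref{thm:iterations}, which already gives a running time of $O\parentheses{\log_\alpha \Delta + f\cdot \log(f/\eps)\cdot \alpha}$ for every choice of $\alpha \ge 2$; the only work left is to plug in the two-case definition of $\alpha$ and bound each case. Write $L \triangleq f\cdot \log(f/\eps)$ for brevity, so the target bound reads $O\parentheses{L + \frac{\log\Delta}{\log\log\Delta} + \min\set{\log\Delta,\ L\cdot(\log\Delta)^{\gamma}}}$ and the generic iteration bound is $O\parentheses{\log_\alpha\Delta + L\alpha}$. First I would dispose of the degenerate regime flagged in the theorem's footnote: if $\Delta$ is bounded by a constant depending on $\gamma$, all $\Delta$-terms are $O(1)$ and $\alpha=2$ gives $O(L)$ iterations, which is within the claimed bound; so from here on assume $\Delta$ is large enough that $(\log\Delta)^{\gamma/2} \ge \log\log\Delta$ and similar asymptotic simplifications hold.

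Next I would split on the case distinction in the definition of $\alpha$. \textbf{Case 1: $\frac{\log\Delta}{L\cdot\log\log\Delta} \ge (\log\Delta)^{\gamma/2}$,} so $\alpha = \max\parentheses{2,\ \frac{\log\Delta}{L\cdot\log\log\Delta}}$. If the max is $2$ then $\log\Delta \le 2L\log\log\Delta$, and $\log_2\Delta + 2L = O(L\log\log\Delta)$; I'd need to check this is absorbed into $L\cdot(\log\Delta)^{\gamma}$, which holds since $\log\log\Delta = O((\log\Delta)^{\gamma})$ asymptotically. If the max is the nontrivial term, then $L\alpha = \frac{\log\Delta}{\log\log\Delta}$ exactly, handling the second summand; and for the first summand, $\log_\alpha\Delta = \frac{\log\Delta}{\log\alpha}$ where $\alpha \ge (\log\Delta)^{\gamma/2}$ (by the case hypothesis combined with $\alpha$ being the larger term), so $\log\alpha \ge \frac{\gamma}{2}\log\log\Delta$ and hence $\log_\alpha\Delta = O\parentheses{\frac{\log\Delta}{\log\log\Delta}}$. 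So Case 1 yields $O\parentheses{L + \frac{\log\Delta}{\log\log\Delta}}$, comfortably inside the target.

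\textbf{Case 2: $\frac{\log\Delta}{L\cdot\log\log\Delta} < (\log\Delta)^{\gamma/2}$,} so $\alpha = 2$. Then the iteration bound from Theorem~\ref{thm:iterations} is $O\parentheses{\log_2\Delta + 2L} = O(\log\Delta + L)$. To match the target I must show $\log\Delta = O\parentheses{\min\set{\log\Delta,\ L\cdot(\log\Delta)^{\gamma}}}$ in this regime, i.e.\ that $\log\Delta = O\parentheses{L\cdot(\log\Delta)^{\gamma}}$, equivalently $(\log\Delta)^{1-\gamma} = O(L)$. The case hypothesis rearranges to $\log\Delta < L\cdot\log\log\Delta\cdot(\log\Delta)^{\gamma/2}$, hence $(\log\Delta)^{1-\gamma/2} < L\log\log\Delta$, and since $\log\log\Delta = O((\log\Delta)^{\gamma/2})$ asymptotically we get $(\log\Delta)^{1-\gamma} = O(L)$, as needed. (The $\min$ in the statement is what lets both cases coexist: Case 1 may have $\log\Delta$ much larger than $L(\log\Delta)^{\gamma}$, Case 2 the reverse.) Combining the two cases gives the claimed $T(f,\Delta,\eps) = O\parentheses{L + \frac{\log\Delta}{\log\log\Delta} + \min\set{\log\Delta,\ L\cdot(\log\Delta)^{\gamma}}}$.

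The routine-but-delicate part — and the only place where care is genuinely needed — is the bookkeeping of the asymptotic simplifications licensed by the footnote: every step above quietly uses that $\log\log\Delta$ is negligible against any fixed positive power of $\log\Delta$, and that when $\alpha$ collapses to $2$ the case hypothesis is strong enough to bound $\log\Delta$ in terms of $L$. I expect no conceptual obstacle; the main thing to get right is confirming that the constant in "$\Delta$ bounded by a constant" can be chosen uniformly in $\gamma$ so that the asymptotic statement is clean, and that the $\gamma/2$ versus $\gamma$ slack in the exponents is exactly what absorbs the stray $\log\log\Delta$ factors.
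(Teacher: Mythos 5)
Your proposal is correct and follows essentially the same route as the paper: both plug the two-case definition of $\alpha$ into the $O(\log_\alpha\Delta + f\log(f/\eps)\cdot\alpha)$ bound of Theorem~\ref{thm:iterations} and split into the same three effective regimes (case hypothesis holds with nontrivial $\alpha$, case hypothesis holds but the max collapses to $2$ — which, as the paper notes, forces $\Delta=O(1)$ — and case hypothesis fails with $\alpha=2$, where $\log\Delta$ is absorbed into the $\min$ term exactly as you derive). The only cosmetic difference is that in the collapsed sub-case the paper immediately concludes $\Delta$ is constant from $2>(\log\Delta)^{\gamma/2}$, while you bound the runtime directly by $O(L\log\log\Delta)$; both are fine.
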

\begin{proof}
First, consider the case where 
$\alpha=\frac{\log \Delta}{f\cdot \log(f/\eps)\cdot \log \log \Delta}\geq 2$
and 
$\alpha \ge (\log\Delta)^{\gamma/2}$. 
This means that the runtime is bounded by
$O\parentheses{\log_\alpha \Delta + f\cdot \log(f/\eps)\cdot \alpha} =
O\parentheses{\frac{\log \Delta}{ \gamma\cdot \log \log \Delta}}\;.$

Second, assume that $\alpha=2$ and $2>\frac{\log \Delta}{f\cdot \log(f/\eps)\cdot \log \log \Delta}\geq (\log \Delta)^{\gamma/2}$.
Then,  the round complexity of Algorithm \WHVC is bounded by
$O\parentheses{\log_2 \Delta + f\cdot \log(f/\eps)}
=O\parentheses{f\cdot \log(f/\eps)\cdot\log\log\Delta}
=O\parentheses{f\cdot \log(f/\eps)},
\;$ where the last transition is correct as $2> (\log \Delta)^{\gamma/2}$ implies that $\Delta$ is constant.

Finally, assume that $\frac{\log \Delta}{f\cdot \log(f/\eps)\cdot \log \log \Delta}< (\log\Delta)^{\gamma/2}$ which implies that 
$$\log\Delta \le \min\set{\log\Delta,f\cdot \log(f/\eps)\cdot (\log\Delta)^{\gamma/2}\cdot \log\log\Delta}=O\parentheses{\min\set{\log\Delta,f\cdot \log(f/\eps)\cdot (\log\Delta)^{\gamma}}}.$$
Therefore, since $\alpha=2$ in this case, the runtime is bounded by 
$$O\parentheses{\log\Delta + f\cdot \log(f/\eps)} = 
O\parentheses{f\cdot \log(f/\eps) + \min\set{\log\Delta,f\cdot \log(f/\eps)\cdot (\log\Delta)^{\gamma}}}.\qedhere$$
\end{proof}
Let $W\triangleq \max_v w(v) / \min_v w(v)$.
By setting $\eps=1/(nW)$, we conclude the following result for an $f$-approximation (recall that we assume that vertex degrees and weights are polynomial in $n$):
\begin{corollary}
Algorithm \MWHVC computes an $f$-approximation 
in $O(f\log n)$ rounds.
\end{corollary}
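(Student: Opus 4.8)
The plan is to read off both halves of the statement from results already in hand — Theorem~\ref{thm:running time} for the running time and the preceding approximation corollary for the ratio — by instantiating them with $\eps=1/(nW)$ and using the standing assumption that vertex weights and degrees are $n^{O(1)}$.

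For the round complexity, I would substitute $\eps=1/(nW)$ into the bound $O\parentheses{f\cdot\log(f/\eps)+\frac{\log\Delta}{\log\log\Delta}+\min\set{\log\Delta,\ f\cdot\log(f/\eps)\cdot(\log\Delta)^{\gamma}}}$ of Theorem~\ref{thm:running time} (for any fixed constant $\gamma>0$). Since $W=n^{O(1)}$, $\Delta=n^{O(1)}$, and $f\le n$, we have $\log(f/\eps)=\log(f\cdot nW)=O(\log n)$ and $\log\Delta=O(\log n)$. Hence the first summand is $O(f\log n)$, the second is $O(\log\Delta/\log\log\Delta)=O(\log n)$, and the third is at most $\log\Delta=O(\log n)$; adding these gives $T(f,\Delta,1/(nW))=O(f\log n)$.

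For the approximation, I would let $\opt^\star$ denote the cost of a minimum-weight integral cover, so $\opt\le\opt^\star$, and recall that the approximation corollary yields a cover $C$ with $w(C)\le(f+\eps)\opt\le(f+\eps)\opt^\star=f\cdot\opt^\star+\eps\cdot\opt^\star$. Now I would invoke the standard argument that a small enough $\eps$ upgrades an $(f+\eps)$-approximation to an exact $f$-approximation for integer weights: the whole set $V$ is a cover, so $\opt^\star\le w(V)\le nW$ (weights lie in $\set{1,\ldots,W}$), hence $\eps\cdot\opt^\star\le 1$. If $\opt^\star<nW$ then integrality forces $\opt^\star\le nW-1$, so $\eps\cdot\opt^\star<1$, and from $w(C)\le f\cdot\opt^\star+\eps\cdot\opt^\star<f\cdot\opt^\star+1$ with $w(C),f\cdot\opt^\star\in\NN$ we conclude $w(C)\le f\cdot\opt^\star$; the remaining case $\opt^\star=nW$ is settled by $w(C)\le w(V)\le nW=\opt^\star\le f\cdot\opt^\star$ (and if $E=\emptyset$ the empty cover is trivially optimal). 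In every case $C$ is an $f$-approximation, which completes the proof.

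I do not foresee a real difficulty: the quantitative content sits entirely in Theorem~\ref{thm:running time} and in the approximation analysis, and this corollary only combines them with the arithmetic of the round bound and the routine integrality/boundary bookkeeping above. Should one wish to avoid the boundary case altogether, replacing $\eps=1/(nW)$ by $\eps=1/(2nW)$ makes $\eps\cdot\opt^\star<1$ unconditionally without changing the $O(f\log n)$ bound.
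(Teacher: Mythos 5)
Your proposal is correct and follows exactly the paper's route: the paper obtains this corollary simply by setting $\eps=1/(nW)$ in Theorem~\ref{thm:running time} and in the $(f+\eps)$-approximation guarantee, relying on the standing assumptions that weights and degrees are $n^{O(1)}$. You merely spell out the integrality/boundary bookkeeping that the paper leaves implicit, and that part of your argument is sound.
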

\medskip\noindent
Additionally, we get the following range of parameters for which the round complexity is still optimal:
\begin{corollary}
Let $f=O\parentheses{(\log \Delta)^{0.99}}$ and $\eps=(\log\Delta)^{-O(1)}$. Then, Algorithm \MWHVC computes an $(f+\eps)$-approximation in $O\parentheses{\frac{\log \Delta}{ \log \log \Delta}}$ rounds.
\end{corollary}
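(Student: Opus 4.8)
The plan is to derive the corollary directly from Theorem~\ref{thm:running time} by fixing a small constant $\gamma$ (any $\gamma<0.01$ works; take $\gamma=1/200$ for concreteness) and checking that, under the hypotheses $f=O((\log\Delta)^{0.99})$ and $\eps=(\log\Delta)^{-O(1)}$, every one of the three summands in the bound $T(f,\Delta,\eps)=O\parentheses{f\cdot\log(f/\eps)+\frac{\log\Delta}{\log\log\Delta}+\min\set{\log\Delta,\,f\cdot\log(f/\eps)\cdot(\log\Delta)^{\gamma}}}$ is $O(\log\Delta/\log\log\Delta)$. The approximation factor $f+\eps$ needs no new argument: it is exactly the content of the corollary on the approximation ratio proved earlier (the returned set $C$ consists of $\beta$-tight vertices, so $w(C)\le(f+\eps)\cdot\opt$, and $C$ is a cover upon termination).

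The one estimate that does all the work is elementary: for any fixed $a<1$ and any fixed $b\ge 0$ one has $(\log\Delta)^{a}\cdot(\log\log\Delta)^{b}=o(\log\Delta/\log\log\Delta)$, since this is equivalent to $(\log\log\Delta)^{b+1}=o((\log\Delta)^{1-a})$ and polylogarithms are dominated by any positive power. First I would record that the hypotheses give $f/\eps=(\log\Delta)^{O(1)}$, hence $\log(f/\eps)=O(\log\log\Delta)$, and therefore $f\cdot\log(f/\eps)=O\parentheses{(\log\Delta)^{0.99}\cdot\log\log\Delta}$. Applying the estimate with $a=0.99$, $b=1$ shows the first summand is $O(\log\Delta/\log\log\Delta)$. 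The second summand is literally $\log\Delta/\log\log\Delta$. For the third summand I bound $\min\set{\cdot,\cdot}\le f\cdot\log(f/\eps)\cdot(\log\Delta)^{\gamma}=O\parentheses{(\log\Delta)^{0.99+\gamma}\cdot\log\log\Delta}$; since $0.99+\gamma<1$, the same estimate (now with $a=0.99+\gamma$, $b=1$) again gives $O(\log\Delta/\log\log\Delta)$.

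There is essentially no obstacle here; the only points requiring care are (i)~choosing $\gamma$ to be a fixed constant strictly below $0.01$ \emph{before} invoking Theorem~\ref{thm:running time}, so that the slack $(\log\Delta)^{0.01}$ left by the exponent $0.99$ still suffices after raising $\log\Delta$ to the power $\gamma$, and (ii)~keeping the $\log\log\Delta$ factors explicit — each occurrence of $\log(f/\eps)$ contributes one such factor, and the denominator $\log\log\Delta$ in the target bound eats a second one, which is exactly why the hypothesis is stated with an exponent strictly less than $1$ rather than $f=O(\log\Delta)$. Summing the three $O(\log\Delta/\log\log\Delta)$ bounds yields the claimed round complexity, which together with the approximation guarantee completes the proof.
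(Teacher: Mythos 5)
Your proposal is correct and follows the route the paper intends: the corollary is stated as a direct consequence of Theorem~\ref{thm:running time}, and your verification that each of the three summands is $O(\log\Delta/\log\log\Delta)$ under the stated hypotheses (after fixing a constant $\gamma<0.01$) is exactly the intended calculation. The observation that $\log(f/\eps)=O(\log\log\Delta)$ and that $(\log\Delta)^{0.99+\gamma}(\log\log\Delta)^2=o(\log\Delta)$ is the whole content, and you have it right.
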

For $f=O(1)$ we also get an extension of range of parameters for which the round complexity is optimal. This extension is  almost exponential compared to the allowed $\eps=(\log\Delta)^{-O(1)}$ in~\cite{BEKS18}. 
\begin{corollary}
Let $f=O(1)$ and $\eps=2^{-O\parentheses{(\log \Delta)^{0.99}}}$. Then our algorithm computes an $(f+\eps)$-approximation and terminates in $O\parentheses{\frac{\log \Delta}{ \log \log \Delta}}$ rounds.
\end{corollary}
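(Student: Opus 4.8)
The plan is to derive this corollary as a direct specialization of Theorem~\ref{thm:running time}, combined with the $(f+\eps)$-approximation guarantee established above (which holds for every $\eps\in(0,1]$). First I would fix the free constant of Theorem~\ref{thm:running time} to a concrete small value, say $\gamma = 0.005$, so that the round complexity reads
$$T(f,\Delta,\eps) = O\parentheses{f\cdot \log(f/\eps) + \frac{\log \Delta}{ \log \log \Delta}+\min\set{\log\Delta,\ f\cdot \log(f/\eps)\cdot (\log\Delta)^{0.005}}}.$$
Under the hypotheses $f = O(1)$ and $\eps = 2^{-O((\log\Delta)^{0.99})}$ we have $f/\eps = 2^{O((\log\Delta)^{0.99})}$ (the constant factor $f$ is absorbed into the exponent since $(\log\Delta)^{0.99}\to\infty$), hence $\log(f/\eps) = O((\log\Delta)^{0.99})$, so that $f\cdot\log(f/\eps) = O((\log\Delta)^{0.99})$ and the third summand is $O((\log\Delta)^{0.995})$.

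Second, I would check that each of the three summands is $O(\log\Delta/\log\log\Delta)$. The middle term is exactly this quantity. For the first and third it suffices to observe that $(\log\Delta)^{0.995} = o\!\parentheses{\log\Delta/\log\log\Delta}$, because the ratio $\parentheses{\log\Delta/\log\log\Delta}\big/(\log\Delta)^{0.995} = (\log\Delta)^{0.005}/\log\log\Delta$ tends to infinity as $\Delta\to\infty$; for the finitely many values of $\Delta$ below the corresponding threshold the $\Delta$-dependent terms are bounded by a constant and absorbed into the $O(\cdot)$, in accordance with the asymptotic convention of the footnote to Theorem~\ref{thm:running time}. Summing the three bounds yields $T(f,\Delta,\eps) = O(\log\Delta/\log\log\Delta)$, as claimed. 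Finally, for the approximation factor I would simply invoke the $(f+\eps)$-approximation corollary above, noting that $\eps = 2^{-O((\log\Delta)^{0.99})}$ lies in $(0,1]$ (using the standing assumption $\Delta\ge 3$).

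The only step requiring genuine care is the exponent bookkeeping: one must choose $\gamma$ strictly smaller than $0.01$ so that $0.99+\gamma < 1$ and the power of $\log\Delta$ appearing in the $\min$-term is strictly sublinear — otherwise that term would not be dominated by $\log\Delta/\log\log\Delta$. Everything else is a routine substitution into Theorem~\ref{thm:running time}, and the fact that $f=O(1)$ also makes assumption~(ii) on vertex degrees automatic.
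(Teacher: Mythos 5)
Your proposal is correct and is exactly the argument the paper intends: the corollary is stated as an immediate consequence of Theorem~\ref{thm:running time}, obtained by choosing the free constant $\gamma$ strictly below $0.01$ and observing that both $f\cdot\log(f/\eps)=O((\log\Delta)^{0.99})$ and the $\min$-term $O((\log\Delta)^{0.99+\gamma})$ are dominated by $\log\Delta/\log\log\Delta$. Your exponent bookkeeping and the remark that $\gamma<0.01$ is the one point requiring care are both accurate.
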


\section{Approximation of Covering ILPs}
In this section, we present a reduction from solving covering integer linear programs (ILPs) to \MWHVC. This reduction implies that one can distributively compute approximate solutions to covering ILPs using a distributed algorithm for \MWHVC.

\paragraph{Notation.} Let $\NN$ denote the set of natural 
numbers.
Let $A$ denote a real $m\times n$ matrix, $\vec{b}\in \RR^n$, and $\vec{w}\in\RR^n$.
Let $LP(A,\vec{b},\vec{w})$ denote the linear program $\min~\vec{w}^{T} \cdot \vec{x}$ subject to $A\cdot \vec{x} \geq \vec{b}$ and $\vec{x} \geq 0$.
Let $ILP(A,\vec{b},\vec{w})$ denote the integer linear program $\min~\vec{w}^{T} \cdot \vec{x}$ subject to $A\cdot \vec{x} \geq \vec{b}$ and $\vec{x} \in\NN^n$.

\begin{definition}
The linear program $LP(A,\vec{b},\vec{w})$ 
and integer linear program $ILP(A,\vec{b},\vec{w})$ are \emph{covering programs} if all the components in $A,\vec{b}, \vec{w}$ are non-negative. 
\end{definition}

\subsection{Distributed Setting}

We denote the number of rows of the matrix $A$ by $m$ and the number of columns by $n$. Let $f(A)$ (resp., $\Delta(A)$) denote the maximum number of nonzero entries in a row (resp., column) of $A$.

Given a covering ILP, $ILP(A,\vec{b},\vec{w})$, the communication network $N(ILP)$ over which the ILP is solved is the bipartite graph $N=(X\times C, E)$, where:
$X=\{x_j\}_{j\in [n]}$, $C=\{c_i\}_{i\in [m]}$, and $E=\{(x_j,c_i)\mid A_{i,j}\neq 0\}$.
We refer to the nodes in $X$ as variable nodes, and to those in $C$ as constraint nodes.  Note that the maximum degree of a constraint node is $f(A)$ and the maximum degree of a variable node is  $\Delta(A)$.

We assume that the local input of every variable node $x_j$ consists of $w_j$ and the $j$'th column of $A$ (i.e., $A_{i,j}$, for $i\in [m]$). Every constraint vertex $c_i$  is given the value of $b_i$ as its local input. We assume that these inputs can be represented by $O(\log (nm))$ bits. %
In $(f+1)$ rounds, every variable node $v_j$ can learn all the components in the rows $i$ of $A$ such that $A_{i,j}\neq 0$ \mbox{as well as the component $b_i$. }

\subsection{Zero-One Covering Programs}
The special case in which a variable may be assigned only the values $0$ or $1$ is called a \emph{zero-one program}. We denote the zero-one covering ILP induced by a matrix $A$ and vectors $\vec{b}$ and $\vec{w}$ by $ZO(A,\vec{b},\vec{z})$.
Every instance of the \MWHVC\ problem is a zero-one program in which the matrix $A$ is the incidence matrix of the hypergraph. The following lemma deals with the converse reduction.

\begin{lemma}\label{lemma:reduce ZO HVC}
Every feasible zero-one covering program  $ZO(A,\vec{b},\vec{w})$ 
can be reduced to an \MWHVC instance with rank $f'<f(A)$ and degree $\Delta'< 2^{f(A)}\cdot \Delta(A)$. \end{lemma}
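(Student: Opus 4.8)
I would turn each ILP variable $x_j$ into a hypergraph vertex $v_j$ of weight $w_j$, so that a vertex cover $C$ encodes the assignment $x_j=[v_j\in C]$, and then replace every covering constraint of $ZO(A,\vec{b},\vec{w})$ by a small family of hyperedges, each supported on a subset of that constraint's own variables, designed so that ``$C$ covers all of these hyperedges'' is logically equivalent to ``$x$ satisfies the constraint''. Since weights are preserved vertex by vertex, covers of the resulting hypergraph $G'$ are in value-preserving bijection with feasible $0/1$ solutions, so an optimal cover of $G'$ yields an optimal solution of $ZO$ and conversely.

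\textbf{The construction.} First drop every constraint with $b_i=0$ (trivially satisfied) and then replace each $A_{ij}$ by $\min\{A_{ij},b_i\}$; since $x_j\le 1$ this changes neither the $0/1$-feasible set nor the nonzero pattern, hence neither $f(A)$ nor $\Delta(A)$. For a constraint $i$ put $S_i\triangleq\{j:A_{ij}\neq 0\}$ and $\sigma_i\triangleq\sum_{j\in S_i}A_{ij}$, and call $Z\subseteq S_i$ \emph{deficient} if $\sum_{j\in S_i\setminus Z}A_{ij}<b_i$, equivalently $\sum_{j\in Z}A_{ij}>\sigma_i-b_i$ (note deficiency is upward closed in $Z$). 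For each \emph{minimal} deficient $Z$ add the hyperedge $e_{i,Z}\triangleq\{v_j:j\in Z\}$. Correctness is a monotone-CNF identity: for $x\in\{0,1\}^n$ and $Z_0\triangleq\{j\in S_i:x_j=0\}$, constraint $i$ reads $\sigma_i-\sum_{j\in Z_0}A_{ij}\ge b_i$, i.e. $Z_0$ is not deficient, i.e. $Z_0$ contains no minimal deficient set, i.e. every minimal deficient $Z$ meets $\{j:x_j=1\}$, i.e. $C=\{v_j:x_j=1\}$ covers every $e_{i,Z}$. Ranging over all $i$, the covers of $G'$ are exactly the feasible $0/1$ assignments, with equal weight. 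Feasibility of $ZO$ forces $\sigma_i\ge b_i$ (otherwise the all-ones vector already violates constraint $i$), so $\emptyset$ is never deficient and every $e_{i,Z}$ is a genuine nonempty hyperedge.

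\textbf{The parameters.} Every hyperedge satisfies $|e_{i,Z}|=|Z|\le|S_i|\le f(A)$. For the strict bound, observe that after the truncation $Z=S_i$ can be a minimal deficient set only if deleting the smallest-coefficient index of $S_i$ still leaves it deficient, which (all truncated coefficients being $\le b_i$) forces every coefficient of row $i$ to equal $b_i$ — i.e. constraint $i$ is the full-width disjunction $\bigvee_{j\in S_i}x_j$; outside this degenerate case every hyperedge built from constraint $i$ has size at most $|S_i|-1\le f(A)-1$, so $f'<f(A)$. For the degree, a vertex $v_j$ occurs in $e_{i,Z}$ only when $j\in S_i$ (at most $\Delta(A)$ constraints) and, for each such $i$, only when $j\in Z\subseteq S_i$ (at most $2^{|S_i|-1}\le 2^{f(A)-1}$ choices of $Z$), so $\Delta'\le\Delta(A)\cdot 2^{f(A)-1}<2^{f(A)}\cdot\Delta(A)$.

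\textbf{Main obstacle.} The conceptual core is the equivalence ``$C$ covers $\{e_{i,Z}\}_Z$'' $\Leftrightarrow$ ``$x$ satisfies constraint $i$'', i.e. that the minimal deficient sets are precisely the clauses of the minimal CNF of the monotone threshold function of row $i$; this is exactly where the restriction $x_j\in\{0,1\}$ enters, and it is worth writing out carefully (including that truncation preserves the $0/1$-feasible set). The other delicate point is the rank claim: getting $f'<f(A)$ rather than merely $f'\le f(A)$ relies on the coefficient-truncation step together with the observation isolating the full-width-disjunction rows (such a row contributes only the single hyperedge $S_i$), and the degree count must be organized so that hyperedges of constraint $i$ that omit $v_j$ are not counted.
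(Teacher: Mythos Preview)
Your construction is the same as the paper's: encode each constraint $i$ as the monotone CNF of its $0/1$ threshold function and introduce one hyperedge per clause. Your parameterization by ``zero sets'' $Z$ is simply the complement within $S_i$ of the paper's ``one sets'' $S$ (their hyperedge is $\sigma_i\setminus S$, yours is $Z=S_i\setminus(S_i\setminus Z)$), and restricting to \emph{minimal} deficient $Z$---the prime implicates---is a harmless tidying that the paper skips. The bijection between covers and feasible $0/1$ assignments, and the degree bound $\Delta'\le\Delta(A)\cdot 2^{f(A)-1}<2^{f(A)}\Delta(A)$, are argued correctly and match the paper.

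The rank argument, however, has a gap. First, the clause ``$Z=S_i$ can be a minimal deficient set only if deleting the smallest-coefficient index of $S_i$ still leaves it deficient'' is inverted: minimality of $Z=S_i$ means every $S_i\setminus\{j\}$ is \emph{not} deficient, i.e.\ $A_{ij}\ge b_i$ for every $j\in S_i$, which together with your truncation gives $A_{ij}=b_i$. More seriously, having isolated this degenerate case you never dispose of it: a full-disjunction row $i$ with $|S_i|=f(A)$ contributes the single hyperedge $S_i$ of size exactly $f(A)$, so your construction can yield $f'=f(A)$, not $f'<f(A)$. The paper's own line ``each of which has length less than $f(A)$'' has the identical defect (take $S=\emptyset\in\mathcal{S}_i$, giving $e_{i,\emptyset}=\sigma_i$). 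What this argument actually proves is $f'\le f(A)$; the strict inequality in the lemma statement appears to be a mild overstatement, and it is only the non-strict bound that is used downstream.
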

\begin{proof}
Let $A_i$ denote the $i$'th row of the matrix $A$. For a subset $S\subseteq [n]$, let $I_S$ denote the indicator vector of $S$. Let $\vec{x}\in \{0,1\}^n$ and let $\sigma_i\triangleq \{j\in [n] \mid A_{i,j}\neq 0\}$.
Feasibility implies that $A_i\cdot I_{\sigma_i} \geq b_i$, for every row $i$.
Let $\mathcal{S}_i$ denote the set of all subsets $S\subset [n]$ such that  the indicator vector $I_S$ does not satisfy the $i$'th constraint, i.e., $A_i \cdot I_S < b_i$
%
The $i$'th constraint is not satisfied, i.e., $A_i\cdot \vec{x}<b_i$,  if and if only there exists a set $S\in \mathcal{S}_i$ such that $\vec{x}=I_S$. Hence, $A_i\cdot \vec{x} < \vec{b}$ if and only if the truth value of the following DNF formula is false: $\varphi_i(x) \triangleq\bigvee_{S\in\mathcal{S}_i}~ \bigwedge_{j\in \sigma_i\setminus S} \text{not}({x}_j)$.
By De Morgan's law, we obtain that $\text{not}(\varphi_i(x))$ is equivalent to a monotone CNF formula $\psi_i(x)$ such that $\psi_i(x)$ has less than $2^{f(A)}$ clauses, each of which has length less than $f(A)$.
%
We now construct the hypergraph $H$ for the \MWHVC instance as follows. For every row $i$ and every $S\in\mathcal{S}$, add the hyperedge $e_{i,S}=\sigma_i\setminus S$. (Feasibility implies that $e_{i,S}$ is not empty.) Given a vertex cover $C$ of the hypergraph, every hyperedge is stabbed by $C$, and hence $I_C$ satisfies all the  formulae $\psi_i(x)$, where $i\in[m]$. Hence, $A_i\cdot I_C \geq b_i$, for every $i$. The converse direction holds as well, and the lemma follows.
\end{proof}

How does 
$N(ILP)$ (a bipartite graph with $m+n$ vertices) simulate the execution of \MWHVC over the hypergraph $H$? 
Each variable node $x_j$ simulates all hyperedges $e_{i,S}$, where $j\in\sigma_i$ and $S\in\mathcal{S}_i$.
First, the variable nodes exchange their weights with the variables nodes they share a constraint with in $O(f(A))$ rounds -- first every $x_j$ broadcasts its own weight and then each $c_i$ sends all neighbors the weights it received.
At each iteration, the variable node sends a raise/stuck message and whether its level was incremented. 
\ifdefined\fullVersion
\footnote{One needs to modify the \MWHVC algorithm slightly so that, in each iteration, the level of every vertex is increased by at most $1$. We defer the details to Appendix~\ref{app:oneLevelup}.} 
\else
\footnote{One needs to modify the \MWHVC algorithm slightly so that, in each iteration, the level of every vertex is increased by at most $1$. We defer the details to the full version.} 
\fi
Notice that the number of rounds required for each such iteration is $O(1+f(A)/\log n)$ (i.e., constant for $f(A)=O(\log n)$).
Each edge node $c_i$ then broadcasts to all vertices two $f(A)$-bit messages that indicate two subsets of vertices of the edge: those that sent a raise message (the complement sent a stuck message) and those that incremented their level.
Each variable node $x_j$ knows how to update its deal with every $e_{i,S}$ for which $j\in\sigma_i$ and $S\in\mathcal{S}_i$.

\medskip\noindent
We 
summarize the complexity of the distributed algorithm \mbox{for solving a zero-one covering 
program.}
\begin{claim}\label{claim:ZO alg}
  There exists a distributed \congest algorithm for computing an $(f+\eps)$-approximate solution for zero-one covering programs with running time of 
  $O({(1+f(A)/\log n)\cdot} T(f(A), 2^{f(A)}\cdot \Delta(A),\eps))$, where $T(f,\Delta,\eps)$ is the running time of Algorithm \MWHVC.
\end{claim}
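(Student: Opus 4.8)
The plan is to chain together the reduction of Lemma~\ref{lemma:reduce ZO HVC}, a round-efficient simulation of Algorithm~\MWHVC\ on the network $N(ILP)$, and the iteration bound of Theorem~\ref{thm:running time}. Given $ZO(A,\vec b,\vec w)$, first apply Lemma~\ref{lemma:reduce ZO HVC} to obtain a hypergraph $H$ of rank $f'<f(A)$ and maximum degree $\Delta'<2^{f(A)}\cdot\Delta(A)$, whose vertex set is $[n]$ carrying the original weights $\vec w$, and for which $I_C$ is a feasible $0$--$1$ solution of the ILP precisely when $C$ is a vertex cover of $H$, with identical weight. Hence the optima of $H$ and of $ZO(A,\vec b,\vec w)$ coincide, so running Algorithm~\MWHVC\ on $H$ with parameter $\eps$ returns a cover $C$ with $w(C)\le(f'+\eps)\cdot\opt<(f(A)+\eps)\cdot\opt$ by the approximation guarantee of the algorithm established above; $I_C$ is then the required $(f+\eps)$-approximate integral solution.

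The heart of the argument is showing that $N(ILP)$ executes one iteration of Algorithm~\MWHVC\ on $H$ in $O(1+f(A)/\log n)$ rounds. Identify vertex $j$ of $H$ with variable node $x_j$; every hypergraph link incident to $j$ has the form $\{e_{i,S},j\}$ with $j\in e_{i,S}\subseteq\sigma_i$, and can be routed along the single link $\{x_j,c_i\}$ of $N(ILP)$, which exists since $A_{i,j}\ne0$. After an $O(f(A))$-round preprocessing phase in which every $x_j$ learns, for each $i$ with $A_{i,j}\ne 0$, the row $A_i$, the value $b_i$, and the weights of all variables in $\sigma_i$, it reconstructs $\mathcal{S}_i$, every $e_{i,S}$ containing $j$, its hypergraph degree and local maximum degree, and hence $\deal_0(e_{i,S})$. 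I would then prove by induction on the iteration that $x_j$ can maintain locally its own level and raise/stuck decision, and the dual $\delta(e_{i,S})$ and covered status of each incident $e_{i,S}$, once it is told, per incident constraint $i$, the subset $R_i\subseteq\sigma_i$ of ``raise'' variables, the subset $L_i\subseteq\sigma_i$ of variables that leveled up, and the subset that joined the cover: indeed $\deal(e_{i,S})$ is multiplied by $\alpha$ iff $e_{i,S}\subseteq R_i$ (testable by $x_j$ since $e_{i,S}\subseteq\sigma_i$), is halved $|e_{i,S}\cap L_i|$ times, and $e_{i,S}$ is covered iff $\sigma_i\setminus S$ meets the current cover. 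Thus each iteration reduces to each $x_j$ sending one bit per incident constraint and each $c_i$ broadcasting the $O(f(A))$-bit subsets $R_i$, $L_i$ and the join set, which takes $O(1+f(A)/\log n)$ rounds under $O(\log n)$-bit messages. Using the per-hyperedge local maximum degree in the definition of $\alpha$ (as permitted by the remark before Theorem~\ref{thm:running time}) avoids any global computation; I would also fold in the minor modification that each vertex raises its level by at most one per iteration (deferred to the appendix), which keeps $L_i$ a single subset and affects neither the approximation nor the iteration count.

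Finally, Theorem~\ref{thm:running time} bounds the number of iterations of Algorithm~\MWHVC\ on $H$ by $T(f',\Delta',\eps)\le T(f(A),2^{f(A)}\cdot\Delta(A),\eps)$, using monotonicity of $T$ in its first two arguments together with $f'<f(A)$ and $\Delta'<2^{f(A)}\Delta(A)$. Multiplying by the $O(1+f(A)/\log n)$ per-iteration cost and absorbing the additive $O(f(A))$ preprocessing yields the claimed round complexity. The step I expect to be the main obstacle is exactly this per-iteration simulation: one must check that although a single link $\{x_j,c_i\}$ of $N(ILP)$ substitutes for up to $2^{f(A)}$ links of the hypergraph communication network, nothing beyond the $O(f(A))$-bit ``raise / level-up / join'' summaries ever needs to cross it, because $x_j$ retains enough static information --- the weights and degrees of all variables sharing each of its constraints --- to regenerate on its own every deal and dual value the simulated hyperedges would have held.
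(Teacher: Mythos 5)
Your proposal is correct and follows essentially the same route as the paper: reduce via Lemma~\ref{lemma:reduce ZO HVC}, let each variable node $x_j$ locally simulate every hyperedge $e_{i,S}$ containing it after an $O(f(A))$-round exchange of weights/rows through the constraint nodes, and in each iteration have the $c_i$ relay the $O(f(A))$-bit ``raise''/``level-up''/``join'' subsets in $O(1+f(A)/\log n)$ rounds, invoking the at-most-one-level-up modification exactly as the paper's footnote does. The paper's own argument is just a less detailed version of this simulation, so no comparison is needed.
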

%

\subsection{Reduction of Covering ILPs to Zero-One Covering}
Consider the covering ILP $ILP(A,\vec{b},\vec{w})$. We present a reduction of the ILP to a \mbox{zero-one covering program.}

\begin{definition}
Define $M(A,\vec{b})\triangleq \max_j \max_i \set{\frac{b_i}{A_{i,j}}\mid A_{i,j}\neq 0}$.
\end{definition}
\noindent
We abbreviate and write $M$ for $M(A,\vec{b})$ when the context is clear.
\begin{proposition}
Limiting $\vec{x}$ to the box $[0,M]^{n}$ does not increase the optimal value of the ILP. 
\end{proposition}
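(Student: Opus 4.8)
The plan is to show that any feasible (integral) solution $\vec{x}$ can be modified so that it lies in the box $[0,M]^n$ without increasing the objective, from which the claim on the optimal value follows immediately. Concretely, I would take an optimal solution $\vec{x}^*$ of $ILP(A,\vec{b},\vec{w})$ and define a truncated vector $\vec{x}'$ by $x'_j \triangleq \min\set{x^*_j, M}$ for every $j$. Since the weights $\vec{w}$ are non-negative (the ILP is a covering program), truncation can only decrease the objective: $\vec{w}^T\vec{x}' \le \vec{w}^T\vec{x}^*$. It remains to verify that $\vec{x}'$ is still feasible, i.e., $A_i\cdot \vec{x}' \ge b_i$ for every row $i$, and that $\vec{x}'\in\NN^n$ (which holds because $M$ is an integer when $b_i/A_{i,j}$ ranges over rationals — or one can simply take $\ceil{M}$, noting this does not affect the argument).

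For feasibility, fix a row $i$. If no coordinate was truncated among the $j$ with $A_{i,j}\neq 0$, then $A_i\cdot\vec{x}' = A_i\cdot\vec{x}^* \ge b_i$ and we are done. Otherwise, there is some index $j_0$ with $A_{i,j_0}\neq 0$ and $x^*_{j_0} > M \ge b_i/A_{i,j_0}$, hence $x'_{j_0} = M \ge b_i/A_{i,j_0}$, so that the single term $A_{i,j_0}\cdot x'_{j_0} \ge b_i$ already. Since all entries of $A$ and all (truncated) coordinates of $\vec{x}'$ are non-negative, the remaining terms of $A_i\cdot\vec{x}'$ only add to this, giving $A_i\cdot \vec{x}' \ge b_i$. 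This establishes feasibility of $\vec{x}'$ row by row, and combined with $\vec{w}^T\vec{x}' \le \vec{w}^T\vec{x}^*$ we conclude that $\vec{x}'$ is a feasible solution in $[0,M]^n$ of value at most the optimum, so restricting to the box does not increase the optimal value.

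The argument is essentially a one-line observation, so there is no serious obstacle; the only point requiring a little care is the interaction between the definition $M = M(A,\vec b) = \max_j\max_i\set{b_i/A_{i,j}\mid A_{i,j}\neq 0}$ and integrality — one must make sure the truncated value $M$ (or $\ceil M$, used in the algorithm as $\xmax = \max\set{1,\ceil{1/a_{\min}}}$ after normalization) is itself a legal integer value for the variables, so that $\vec{x}'\in\NN^n$. Using $\ceil{M}$ in place of $M$ only enlarges the box and preserves both feasibility and the objective bound, so the proposition holds as stated.
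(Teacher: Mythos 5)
Your truncation argument is correct, and it is essentially the canonical proof: note that the paper states this proposition with no proof at all, so there is nothing on the paper's side to compare against — you are supplying the argument the authors evidently considered immediate. The row-by-row feasibility check is exactly right: either no truncated coordinate appears in row $i$ with a nonzero coefficient, in which case $A_i\cdot\vec{x}'=A_i\cdot\vec{x}^*\ge b_i$, or some truncated $j_0$ has $A_{i,j_0}\neq 0$, in which case $A_{i,j_0}\cdot M\ge b_i$ by the very definition of $M(A,\vec{b})$ and non-negativity of $A$ and $\vec{x}'$ finishes the row; non-negativity of $\vec{w}$ makes the truncation cost-free. Your integrality caveat is also a genuine (if minor) imprecision in the paper's statement rather than a gap in your proof: when $M\notin\NN$ the integer points of $[0,M]^n$ are those of $[0,\floor{M}]^n$, and truncating to $\floor{M}$ can destroy feasibility (e.g., the single constraint $2x\ge 3$ has $M=1.5$ but integer optimum $x=2\notin[0,M]$). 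Truncating to $\ceil{M}$, as you propose, repairs this and is consistent with what the paper actually uses downstream, where each variable is encoded with $B=\ceil{\log_2 M}$ binary digits in Claim~\ref{claim:ILP to ZO}.
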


\begin{claim}\label{claim:ILP to ZO}
  Every covering ILP $ILP(A,\vec{b},\vec{w})$ can be solved by a zero-one covering program  $ZO(A',\vec{b},\vec{w'})$, where $f(A')\leq f(A)\cdot \ceil{\log_2(M)+1}$ and $\Delta(A')=\Delta(A)$.
  \end{claim}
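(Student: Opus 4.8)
The plan is the standard binary-encoding trick, made to respect the two structural parameters $f$ and $\Delta$. Set $k\triangleq \ceil{\log_2(M)+1}$. First I would invoke the preceding Proposition to restrict attention to integer solutions $\vec x\in[0,M]^n\cap\NN^n$ without increasing the optimum. For each original variable $x_j$ I introduce $k$ binary variables $x_{j,0},\dots,x_{j,k-1}$ and encode $x_j=\sum_{t=0}^{k-1}2^t x_{j,t}$. Accordingly I define the new instance $ZO(A',\vec b,\vec{w'})$ by setting, for each row $i$, $A'_{i,(j,t)}\triangleq 2^t\cdot A_{i,j}$, and $w'_{j,t}\triangleq 2^t\cdot w_j$; the right-hand side $\vec b$ is unchanged. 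All entries of $A'$, $\vec b$, $\vec{w'}$ are nonnegative, so $ZO(A',\vec b,\vec{w'})$ is indeed a covering zero-one program.

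Next I would verify the parameter bounds. In row $i$ of $A'$, the entry $A'_{i,(j,t)}$ is nonzero exactly when $A_{i,j}\neq 0$, and for each such $j$ there are $k$ columns $(j,0),\dots,(j,k-1)$; hence the number of nonzero entries in a row of $A'$ is at most $k$ times that of $A$, giving $f(A')\le f(A)\cdot\ceil{\log_2(M)+1}$. Similarly, the column $(j,t)$ of $A'$ has the same support as the column $j$ of $A$, so every new variable appears in exactly as many constraints as $x_j$, and $\Delta(A')=\Delta(A)$.

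Then I would establish that $ZO(A',\vec b,\vec{w'})$ ``solves'' the ILP, i.e.\ the optima coincide and an approximate ZO solution yields an approximate ILP solution with the same ratio. Given any feasible $\vec{x'}\in\{0,1\}^{nk}$, put $x_j\triangleq\sum_t 2^t x_{j,t}$; then for each row $i$, $\sum_j A_{i,j}x_j=\sum_{j,t}A'_{i,(j,t)}x'_{j,t}\ge b_i$, so $\vec x$ is ILP-feasible, and $\vec w^T\vec x=\sum_{j,t}2^t w_j x'_{j,t}=\vec{w'}^T\vec{x'}$, so cost is preserved; thus $\opt(ILP)\le \opt(ZO)$. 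Conversely, an optimal ILP solution may be taken in $[0,M]^n\cap\NN^n$ by the Proposition, and since $2^{k}\ge 2^{\log_2 M+1}=2M> M$, every integer in $\{0,\dots,M\}$ has a $k$-bit binary expansion; encoding such an optimal $\vec x$ bit-by-bit yields a feasible $\vec{x'}$ of equal cost, so $\opt(ZO)\le\opt(ILP)$. Hence the optima are equal, and applying the first direction to a $\rho$-approximate ZO solution produces a $\rho$-approximate ILP solution. Finally I would note the reduction is local: variable node $x_j$ knows $w_j$ and its column of $A$, hence can construct its $k$ copies and their entries with no communication.

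The only real subtlety — and the point worth spelling out carefully — is the representability check: confirming that $k=\ceil{\log_2(M)+1}$ bits suffice to cover the whole box $[0,M]$ (so that no optimal ILP solution is lost), while simultaneously observing that the extra representable values above $M$ cause no harm because covering constraints are monotone and the objective is nonnegative. Everything else is a routine bookkeeping of supports and of the linear identity $\sum_j A_{i,j}x_j=\sum_{j,t}A'_{i,(j,t)}x'_{j,t}$.
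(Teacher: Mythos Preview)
Your proposal is correct and follows essentially the same approach as the paper: binary-encode each bounded integer variable $x_j$ as $\sum_t 2^t x_{j,t}$, replace column $j$ of $A$ by its scaled copies $2^t A^{(j)}$, and scale $\vec w$ likewise. The paper's proof is considerably more terse (it defines $B=\ceil{\log_2 M}$, describes the column replacement and weight scaling, and stops), whereas you additionally spell out the support analysis for $f(A')$ and $\Delta(A')$, the two-way cost-preserving correspondence between feasible solutions, and the representability check---all of which are implicit in the paper but worth writing down.
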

\begin{proof}
Let 
$B = \ceil{\log_2 M}$.
Limiting each variable $x_j$ by $M$ means that we can replace $x_j$ by $B$ zero-one variables $\set{x_{j,\ell}}_{\ell=0}^{B-1}$ that correspond to the binary representation of $x_j$, i.e., $x_j=\sum_{\ell=0}^{B-1} 2^{\ell}\cdot x_{j,\ell}$.  This replacement means that the dimensions of the matrix $A'$ are $m\times n'$, where $n'=n\cdot B$. The $j$'th column $A^{(j)}$ of $A$ is replaced by $B$ columns, indexed $0$ to $B-1$, where the $\ell$'th column equals $2^{\ell}\cdot A^{(j)}$. The vector $\vec{w'}$ is obtained by duplicating and scaling the entries of $\vec{w}$ in the same fashion.
\end{proof}

\noindent
Combining Claims~\ref{claim:ZO alg} and~\ref{claim:ILP to ZO}, we obtain the following result.
\begin{claim}
  There exists a distributed \congest algorithm for computing an $(f+\eps)$-approximate solution for covering integer linear programs $ILP(A,\vec{b},\vec{w})$ with running time of $O({(1+f(A)/\log n)\cdot} T(f(A)\cdot \log M, 2^{f(A)}\cdot M\cdot \Delta(A),\eps))$, where $T(f,\Delta,\eps)$ is the running time of Algorithm \MWHVC.
\end{claim}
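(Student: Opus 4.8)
The plan is to derive this as a straightforward composition of the two preceding reductions: Claim~\ref{claim:ILP to ZO} turns the covering ILP into a zero-one covering program, and Claim~\ref{claim:ZO alg} solves a zero-one covering program via a single invocation of Algorithm~\MWHVC. The only things needing care are (a) that the reduction of Claim~\ref{claim:ILP to ZO} costs nothing in the \congest model because it is purely local, and (b) bookkeeping of the rank and degree parameters that get fed into $T$.

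First I would observe that the reduction of Claim~\ref{claim:ILP to ZO} is performed in place on the network $N(ILP)$, with no extra communication beyond the $O(f(A))$-round preprocessing already used in the zero-one case. Each variable node $x_j$ learns $b_i$ from every incident constraint node $c_i$, computes (a local upper bound on) $M$ as $\max_i\set{b_i/A_{i,j}\mid A_{i,j}\neq0}$ using its own column $A^{(j)}$, and sets $B\triangleq\ceil{\log_2 M+1}$; it then internally maintains the $B$ zero-one variables $x_{j,0},\dots,x_{j,B-1}$ of weights $2^{\ell}w_j$ and columns $2^{\ell}A^{(j)}$, and recovers an ILP solution of equal cost from any feasible zero-one solution via $x_j=\sum_\ell 2^\ell x_{j,\ell}$. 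Crucially, every $x_{j,\ell}$ is incident in the zero-one network to exactly the constraint nodes that $x_j$ was incident to — this is the reason $\Delta(A')=\Delta(A)$ in Claim~\ref{claim:ILP to ZO} — so all bit-copies of $x_j$ live on the single node $x_j$ and $N(ILP)$ simulates the zero-one network with $O(1)$ overhead per simulated round; the per-iteration cost is therefore the $O(1+f(A')/\log n)$ already charged in Claim~\ref{claim:ZO alg} for exchanging the $f(A')$-bit ``raise/stuck'' and ``level-up'' indicator vectors.

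Next I would simply invoke Claim~\ref{claim:ZO alg} on the zero-one program $ZO(A',\vec b,\vec w')$ produced above, obtaining an $(f+\eps)$-approximate solution in $O\parentheses{(1+f(A')/\log n)\cdot T\parentheses{f(A'),\,2^{f(A')}\cdot\Delta(A'),\,\eps}}$ rounds, and then substitute the parameter bounds of Claim~\ref{claim:ILP to ZO}: $f(A')\le f(A)\cdot\ceil{\log_2 M+1}=O(f(A)\log M)$ and $\Delta(A')=\Delta(A)$. Feeding these into $T$ yields the stated bound $O\parentheses{(1+f(A)/\log n)\cdot T\parentheses{f(A)\log M,\,2^{f(A)}\cdot M\cdot\Delta(A),\,\eps}}$, using that $T$ is monotone in its first two arguments and that $\log$ of the degree contributes only $f(A)+\log M+\log\Delta(A)$.

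The step I expect to be the main obstacle is pinning down the degree argument passed to $T$. Pushing $f(A')=O(f(A)\log M)$ through Lemma~\ref{lemma:reduce ZO HVC} naively gives a hypergraph degree of $2^{f(A')}\cdot\Delta(A')=M^{O(f(A))}\cdot\Delta(A)$, which is larger than the $2^{f(A)}\cdot M\cdot\Delta(A)$ appearing in the statement; obtaining the cleaner bound requires counting, per original constraint $i$, the hyperedges $e_{i,S}$ incident to a fixed bit-copy $x_{j,\ell}$ more carefully — i.e.\ bounding the number of (minimal) monotone clauses of the single knapsack-cover constraint $\sum_{j,\ell}2^\ell A_{i,j}x_{j,\ell}\ge b_i$ rather than by the crude $2^{f(A')}$. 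Once that count is established the rest is immediate. Everything else — locality of the reduction, the $O(1)$ simulation overhead, and cost/feasibility preservation of the binary encoding — is routine and parallels the zero-one case already treated.
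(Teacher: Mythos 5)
Your decomposition is exactly the paper's: compose Claim~\ref{claim:ILP to ZO} (ILP $\to$ zero-one) with Claim~\ref{claim:ZO alg} (zero-one $\to$ \MWHVC{}), note the reduction is local, and substitute $f(A')\le f(A)\cdot\ceil{\log_2 M+1}$, $\Delta(A')=\Delta(A)$ into $T$. The locality discussion and the observation that all bit-copies of $x_j$ live on the node $x_j$ are fine and match the paper's simulation argument.

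The problem is that you explicitly leave open the one step that is not bookkeeping: justifying the second argument $2^{f(A)}\cdot M\cdot\Delta(A)$ passed to $T$. As you note, blind composition through Lemma~\ref{lemma:reduce ZO HVC} gives $\Delta_{HVC}\le 2^{f(A')}\cdot\Delta(A')=2^{f(A)(1+\log_2 M)}\cdot\Delta(A)=\parentheses{2M}^{f(A)}\cdot\Delta(A)$, and you defer the "more careful count" that would reduce this to $2^{f(A)}M\Delta(A)$. That count does not exist in the form you hope for: the distinct hyperedges $e_{i,S}$ arising from one expanded constraint $\sum_{j,\ell}2^{\ell}A_{i,j}x_{j,\ell}\ge b_i$ are in bijection with the infeasible value-tuples $(v_j)_{j\in\sigma_i}$ with $v_j=\sum_{\ell\in S_j}2^{\ell}$ and $\sum_j A_{i,j}v_j<b_i$ (the map $S_j\mapsto v_j$ is injective), and since each $v_j$ can range over up to $M$ values this is $\Theta(M^{|\sigma_i|})$ in the worst case (e.g.\ $x_1+x_2\ge M$ already yields $\sim M^2/2$ hyperedges through a fixed bit-vertex). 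So the honest degree is $M^{\Theta(f(A))}\cdot\Delta(A)$, not $2^{f(A)}M\Delta(A)$. For what it is worth, the paper's own proof closes this step by writing $2^{f_{ILP}\cdot(1+\log M)}\cdot\Delta_{ILP}=2^{f_{ILP}}\cdot 2M\cdot\Delta_{ILP}$, an identity valid only for $f_{ILP}=1$; your instinct that something is off here is correct, but your proposal neither closes the gap nor flags that the stated parameter should be weakened. Since $T$ depends on its degree argument only through $\log\Delta$ and $\log\log\Delta$, substituting the correct $M^{O(f(A))}\Delta(A)$ costs only an extra factor $f(A)$ inside the $\log M$ term, which is absorbed by the $\parentheses{f\log M}^{1.01}$ in the final bound \ILPcomplexity{} — but that repair has to be stated, not merely hoped for.
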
 
\begin{proof}
Let $f_{HVC}, f_{ZO}, F_{ILP}$ denote the ranks of the hypergraph vertex cover instance, zero-one covering program, and ILP, respectively. We use the same notation for maximum degrees.
The reduction of zero-one programs to \MWHVC in Lemma~\ref{lemma:reduce ZO HVC} implies that $f_{HVC}\leq f_{ZO}$ and $\Delta_{HVC}\leq 2^{f_{ZO}}\cdot \Delta_{ZO}$.
The reduction of covering ILPs to zero-one programs in Claim~\ref{claim:ILP to ZO} implies that $f_{ZO}\leq f_{ILP}\cdot (1+\log M)$ and $\Delta_{ZO}\leq \Delta_{ILP}$.
The composition of the reductions gives $f_{HVC}\leq f_{ILP}\cdot (1+\log M)$ and $\Delta_{HVC}\leq 2^{f_{ILP}\cdot (1+\log M)}\cdot \Delta_{ILP}=2^{F_{ILP}}\cdot 2M\cdot \Delta_{ILP}$ 
\end{proof}

After some simplifications, the running time of the resulting algorithm for $(f+\eps)$-approximate integer covering linear programs is 
$\ILPcomplexity$.  
\section*{Acknowledgements}
We thank the anonymous reviewers for their helpful remarks.

\bibliographystyle{alpha}
\bibliography{paper}

\newcommand{\etalchar}[1]{$^{#1}$}
\begin{thebibliography}{AFP{\etalchar{+}}09}

\bibitem[AFP{\etalchar{+}}09]{AstrandFPRSU09}
Matti Astrand, Patrik Flor{\'{e}}en, Valentin Polishchuk, Joel Rybicki, Jukka
  Suomela, and Jara Uitto.
\newblock A local 2-approximation algorithm for the vertex cover problem.
\newblock In {\em Distributed Computing, 23rd International Symposium, {DISC}
  2009, Elche, Spain, September 23-25, 2009. Proceedings}, pages 191--205,
  2009.

\bibitem[AS10]{AstrandS10}
Matti Astrand and Jukka Suomela.
\newblock Fast distributed approximation algorithms for vertex cover and set
  cover in anonymous networks.
\newblock In {\em {SPAA} 2010: Proceedings of the 22nd Annual {ACM} Symposium
  on Parallelism in Algorithms and Architectures, Thira, Santorini, Greece,
  June 13-15, 2010}, pages 294--302, 2010.

\bibitem[BCGS17]{Bar-YehudaCGS17}
Reuven Bar{-}Yehuda, Keren Censor{-}Hillel, Mohsen Ghaffari, and Gregory
  Schwartzman.
\newblock Distributed approximation of maximum independent set and maximum
  matching.
\newblock In {\em {PODC}}, pages 165--174. {ACM}, 2017.

\bibitem[BCS17]{Bar-YehudaCS17}
Reuven Bar{-}Yehuda, Keren Censor{-}Hillel, and Gregory Schwartzman.
\newblock A distributed {(2} + {\(\epsilon\)})-approximation for vertex cover
  in o(log {\(\Delta\)} / {\(\epsilon\)} log log {\(\Delta\)}) rounds.
\newblock {\em J. {ACM}}, 64(3):23:1--23:11, 2017.

\bibitem[BEKS18]{BEKS18}
R.~{Ben-Basat}, G.~{Even}, K.-i. {Kawarabayashi}, and G.~{Schwartzman}.
\newblock {A Deterministic Distributed $2$-Approximation for Weighted Vertex
  Cover in $O(\log n\log\Delta / \log^2\log\Delta)$ Rounds}.
\newblock In {\em {SIROCCO}}, 2018.

\bibitem[CKP16]{ChangKP16}
Yi{-}Jun Chang, Tsvi Kopelowitz, and Seth Pettie.
\newblock An exponential separation between randomized and deterministic
  complexity in the {LOCAL} model.
\newblock In {\em {FOCS}}, pages 615--624. {IEEE} Computer Society, 2016.

\bibitem[CV86]{ColeV86}
Richard Cole and Uzi Vishkin.
\newblock Deterministic coin tossing with applications to optimal parallel list
  ranking.
\newblock {\em Information and Control}, 70(1):32--53, 1986.

\bibitem[EGM18]{unweightedHVC}
Guy Even, Mohsen Ghaffari, and Moti Medina.
\newblock {Distributed Set Cover Approximation: Primal-Dual with Optimal
  Locality}.
\newblock In {\em {DISC}}, 2018.

\bibitem[Gha16]{ghaffari2016improved}
Mohsen Ghaffari.
\newblock An improved distributed algorithm for maximal independent set.
\newblock In {\em Proceedings of the twenty-seventh annual ACM-SIAM symposium
  on Discrete algorithms}, pages 270--277. Society for Industrial and Applied
  Mathematics, 2016.

\bibitem[GKP08]{GrandoniKP08}
Fabrizio Grandoni, Jochen K{\"{o}}nemann, and Alessandro Panconesi.
\newblock Distributed weighted vertex cover via maximal matchings.
\newblock {\em {ACM} Transactions on Algorithms}, 5(1), 2008.

\bibitem[GS17]{GhaffariS17}
Mohsen Ghaffari and Hsin{-}Hao Su.
\newblock Distributed degree splitting, edge coloring, and orientations.
\newblock In {\em {SODA}}, pages 2505--2523. {SIAM}, 2017.

\bibitem[Hoc82]{Hochbaum82}
Dorit~S. Hochbaum.
\newblock Approximation algorithms for the set covering and vertex cover
  problems.
\newblock {\em {SIAM} J. Comput.}, 11(3):555--556, 1982.

\bibitem[Kar72]{Karp72}
Richard~M. Karp.
\newblock Reducibility among combinatorial problems.
\newblock In {\em Proceedings of a symposium on the Complexity of Computer
  Computations, held March 20-22, 1972, at the {IBM} Thomas J. Watson Research
  Center, Yorktown Heights, New York.}, pages 85--103, 1972.

\bibitem[KMW06]{KuhnMW06}
Fabian Kuhn, Thomas Moscibroda, and Roger Wattenhofer.
\newblock The price of being near-sighted.
\newblock In {\em Proceedings of the Seventeenth Annual {ACM-SIAM} Symposium on
  Discrete Algorithms, {SODA} 2006, Miami, Florida, USA, January 22-26, 2006},
  pages 980--989, 2006.

\bibitem[KMW16]{KuhnMW16}
Fabian Kuhn, Thomas Moscibroda, and Roger Wattenhofer.
\newblock Local computation: Lower and upper bounds.
\newblock {\em J. {ACM}}, 63(2):17:1--17:44, 2016.

\bibitem[Kuh05]{Kuhn2005}
Fabian Kuhn.
\newblock {\em The price of locality: exploring the complexity of distributed
  coordination primitives}.
\newblock PhD thesis, {ETH} Zurich, 2005.

\bibitem[KVY94]{KhullerVY94}
Samir Khuller, Uzi Vishkin, and Neal~E. Young.
\newblock A primal-dual parallel approximation technique applied to weighted
  set and vertex covers.
\newblock {\em J. Algorithms}, 17(2):280--289, 1994.

\bibitem[KY11]{KoufogiannakisY2011}
Christos Koufogiannakis and Neal~E. Young.
\newblock Distributed algorithms for covering, packing and maximum weighted
  matching.
\newblock {\em Distributed Computing}, 2011.

\bibitem[PR01]{PanconesiR01}
Alessandro Panconesi and Romeo Rizzi.
\newblock Some simple distributed algorithms for sparse networks.
\newblock {\em Distributed Computing}, 14(2):97--100, 2001.

\bibitem[PS09]{PolishchukS09}
Valentin Polishchuk and Jukka Suomela.
\newblock A simple local 3-approximation algorithm for vertex cover.
\newblock {\em Inf. Process. Lett.}, 109(12):642--645, 2009.

\end{thebibliography}
\appendix
\section{Primal-Dual Approach}
 \label{sec:primal dual}

\medskip
\noindent
The fractional LP relaxation of \WHVC\ is defined as follows.
\begin{alignat*}{2}\label{eq:P}
  & \text{minimize: }  \sum_{v\in V}w(v) \cdot x(v) &&\\
  & \text{subject to: }& \quad & \\
&\tag{$\mathcal{P}$}
  \begin{aligned}
                   \sum_{v\in e} x(v)  &\geq 1,& \forall e\in E\\
                   x(v) &\geq 0,&\forall v\in V
                 \end{aligned}
\end{alignat*}
The dual LP is an \emph{Edge Packing} problem defined as follows:
\begin{alignat*}{2}\label{eq:D}
  & \text{maximize: }  \sum_{e\in E} \delta(e)&&\\
  & \text{subject to: }& \quad &\\
&\tag{$\mathcal{D}$}
  \begin{aligned}
                   \sum_{e\ni v} \delta(e) &\leq w(v),& \forall v\in V\\
                   \delta(e) &\geq 0,&\forall e\in E
                 \end{aligned}
\end{alignat*}
The following claim is used for proving the approximation ratio of
the \WHVC algorithm.
\begin{claim}\label{claim:ratio PCWHVC}
  Let $\opt$ denote the value of an optimal fractional solution of the
  primal LP~\eqref{eq:P}. Let $\{\delta(e)\}_{e\in E}$ denote a
  feasible solution of the dual LP~\eqref{eq:D}.  Let $\eps\in(0,1)$
  and $\beta\triangleq \eps/(f+\eps)$. Define the $\beta$-tight 
 vertices by:
  \begin{align*}
    T_{\eps} &\triangleq \{ v\in V \mid \sum_{e\ni v} \delta(e)\geq (1-\beta)\cdot w(v)\}.
  \end{align*}
Then $w(T_{\eps})\leq (f+\eps) \cdot \opt$.
\end{claim}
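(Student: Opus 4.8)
The statement is the standard primal-dual / LP-duality bound for the $f$-approximation of vertex cover, weakened by the slack parameter $\beta$. The plan is to lower bound the dual objective $\sum_{e\in E}\delta(e)$ in terms of $w(T_\eps)$, and then invoke weak LP duality to compare with $\opt$. First I would observe that since the hypergraph has rank $f$, every edge $e\in E$ contains at most $f$ vertices, so each $\delta(e)$ is counted at most $f$ times in the sum $\sum_{v\in V}\sum_{e\ni v}\delta(e)$; restricting the outer sum to $T_\eps$ only decreases it, giving
\begin{align*}
\sum_{v\in T_\eps}\sum_{e\ni v}\delta(e) \;\le\; \sum_{v\in V}\sum_{e\ni v}\delta(e) \;=\; \sum_{e\in E}|e|\cdot\delta(e)\;\le\; f\cdot\sum_{e\in E}\delta(e).
\end{align*}

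Next I would use the defining property of $T_\eps$: for every $v\in T_\eps$ we have $\sum_{e\ni v}\delta(e)\ge (1-\beta)\,w(v)$. Summing over $v\in T_\eps$ yields $(1-\beta)\,w(T_\eps)\le \sum_{v\in T_\eps}\sum_{e\ni v}\delta(e)\le f\cdot\sum_{e\in E}\delta(e)$, hence
\begin{align*}
w(T_\eps)\;\le\;\frac{f}{1-\beta}\cdot\sum_{e\in E}\delta(e).
\end{align*}
Then, since $\{\delta(e)\}$ is feasible for the dual LP $(\mathcal D)$ and $\opt$ is the optimum of the primal LP $(\mathcal P)$, weak duality gives $\sum_{e\in E}\delta(e)\le \opt$. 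Combining, $w(T_\eps)\le \frac{f}{1-\beta}\cdot\opt$.

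Finally I would check that $\frac{f}{1-\beta}=f+\eps$ with the chosen $\beta=\eps/(f+\eps)$: indeed $1-\beta = 1-\frac{\eps}{f+\eps}=\frac{f}{f+\eps}$, so $\frac{f}{1-\beta}=f+\eps$ exactly, giving $w(T_\eps)\le (f+\eps)\cdot\opt$ as claimed. There is no real obstacle here — the only thing to be slightly careful about is the direction of the duality inequality (we need $\sum_e\delta(e)\le\opt$, which is weak duality in the right direction since $(\mathcal D)$ is a maximization dual of the minimization primal $(\mathcal P)$) and the fact that restricting a sum of nonnegative terms to a subset of indices can only decrease it, which is what lets us pass from $T_\eps$ to $V$.
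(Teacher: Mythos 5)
Your proof is correct and follows essentially the same route as the paper: bound $w(T_\eps)$ via the $\beta$-tightness condition, double-count $\sum_{v\in T_\eps}\sum_{e\ni v}\delta(e)$ using the rank bound $f$ and nonnegativity of the $\delta(e)$, apply weak duality, and simplify $f/(1-\beta)=f+\eps$. No differences worth noting.
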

\begin{proof}
  \begin{align*}
    w(T_{\eps}) &=\sum_{v\in T_{\eps}} w(v)\\
                              &\leq \frac{1}{1-\beta} \cdot \parentheses{\sum_{v\in T_{\eps}} \sum_{e\ni v} \delta(e)}\\
                              &\leq \frac{f}{1-\beta} \sum_{e\in E} \delta(e) \leq (f+\eps)\cdot \opt.
  \end{align*}
  The last transition follows from $f/(1-\beta)=f+\eps$ and by  weak 
  duality. The claim follows.
\end{proof}
\section{Adaptation to the CONGEST model}\label{sec:congest}
To complete the discussion, we need to show that the message lengths
in Algorithm \WHVC are $O(\log n)$.
\begin{enumerate}
\item In round $0$, every vertex $v$ sends its weight $w(v)$ and
  degree $|E(v)|$ to every hyperedge in $e\in E(v)$. We assume that
  the weights and degrees are polynomial in $n$, hence the length of the binary
  representations of $w(v)$ and $|E(v)|$ is $O(\log n)$.
  
  Every hyperedge $e$ sends back to every $v\in e$ the pair $(w(v_e),|E(v_e)|)$,
  where $v_e$ has the smallest normalized weight, i.e.,
  $v_e=\argmin_{v\in e} \{w(v)/|E(v)|\}$.

  Every vertex $v\in e$ locally computes $\deal_0(e)=\beta \cdot w(v_e)/|E(v_e)|$
  and $\delta_0(e)=\deal_0(e)$.

\item In round $i\geq 1$, every vertex sends messages. Messages of the sort:
  ``$e$ is covered'', ``raise'', or ``stuck'' require
  only a constant number of bits.  
  The increment of $v$'s level needs to be sent to the edges in $E(v)$. These increments require $O(\log z) =O(\log n)$ bits.

\item Every edge $e$ sends to every $v\in e$ the number of times that $\deal(e)$ is halved in this iteration. This message is $O(\log z)$ bits long.

\item Every edge sends the final deal to the vertices.
Instead of sending the value of the deal, the edge can
send a single bit indicating whether the deal was multiplied by $\alpha$.

\item Finally, if $\alpha=\alpha(e)$ is set locally based on the local
  maximum degree $\max_{v\in e} |E(v)|$, then every vertex $v$ sends
  its degree to all the edges $e\in E(v)$. The local maximum degree
  for $e$ is sent to every vertex $v\in V$, and this parameter is used
  to compute $\alpha(e)$ locally.
\end{enumerate}
\ifdefined\fullVersion
\section{Algorithm with At Most One Level-up per Iteration}\label{app:oneLevelup}
\fi
We propose to do a single change that will ensure that no vertex levels up more than once per iteration. To that end, we modify Line~\ref{item:update} of our \MWHVC algorithm to
\begin{itemize}
    \item 
    For every uncovered edge $e$, 
    if \emph{all} incoming messages are ``raise''
    $\deal(e)\gets\alpha\cdot \deal(e)$. Send $\deal(e)$ to
    every $v\in e$, who updates $\delta(e)\gets \delta(e)+\deal(e)/2$.
\end{itemize}
That is, the algorithm remains intact except that the dual variables $\delta(e)$ are raised by $\deal(e)/2$ rather than by $\deal(e)$.
Intuitively, this guarantees that a vertex's slack does not reduce by more than 50\% in each iteration and therefore its level may increase by at most one.
We now revisit the proof of~\ref{claim:feasible}, and, specifically, the \emph{Proof of feasibility of $\delta_{i-1}(e)$ for $i>1$.}

\paragraph{Proof of feasibility of $\delta_{i-1}(e)$ for $i>1$.}~
Consider a vertex $v$. If $v$ joins $C$ in iteration $i-1$, then $\delta_{i-1}=\delta_{i-2}$, and the packing constraint of $v$ holds by the induction hypothesis. If $v\notin C$, then by $\delta_{i-1}(e)=\delta_{i-2}(e)+\deal_{i-1}(e)/2$, Claim~\ref{claim:vault}, and the induction hypothesis for Eq.~\ref{eq:sandwich}, we have
\vspace{-3mm}
\begin{multline}\label{eq:lvlupCap}
\vspace{-3mm}
    {\sum_{e\in E(v)} \delta_{i-1}(e) }= \sum_{e\in E(v)} \parentheses{\delta_{i-2}(e)+{\textstyle{\deal_{i-1}(v)/2}}}\\
    {\leq \parentheses{1-0.5^{\ell_{i-1}(v)+1}+0.5^{\ell_{i-1}(v)+2}}\cdot w(v)= \parentheses{1-0.5^{(\ell_{i-1}(v)+1)+1}}\cdot w(v)\;.\qquad{}}\qedhere
\end{multline}
{As evident by Eq.~\ref{eq:lvlupCap}, the vertex $v$'s level may increase by at most once in each iteration.}
\begin{corollary}\label{cor:lvlupBound}
{For any iteration $i\ge 1$ and vertex $v$: $\ell_{i}(v)\le\ell_{i-1}(v) + 1$.}
\end{corollary}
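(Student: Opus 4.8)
The plan is to read off the bound directly from the modified feasibility statement in Eq.~\ref{eq:lvlupCap} together with the level-defining rule in Step~\ref{item:increment level}. First I would recall the two facts that the modified algorithm guarantees at the start of iteration $i$: (i) by the induction hypothesis of Eq.~\ref{eq:sandwich} (applied at index $i-1$), we have $\sum_{e\in E(v)}\delta_{i-2}(e)\le\parentheses{1-0.5^{\ell_{i-1}(v)+1}}\cdot w(v)$, and (ii) by Claim~\ref{claim:vault}, $\sum_{e\in E'(v)}\deal_{i-1}(e)\le 0.5^{\ell_{i-1}(v)+1}\cdot w(v)$. Since in the modified Step~\ref{item:update} the dual variables are raised by only $\deal_{i-1}(e)/2$, the total increase to $\sum_{e\in E(v)}\delta(e)$ during iteration $i-1$ is at most $\tfrac12\cdot 0.5^{\ell_{i-1}(v)+1}\cdot w(v)=0.5^{\ell_{i-1}(v)+2}\cdot w(v)$, which is exactly the computation in Eq.~\ref{eq:lvlupCap}. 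Hence, just before the while-loop of Step~\ref{item:increment level} runs in iteration $i$, we have $\sum_{e\in E(v)}\delta(e)\le\parentheses{1-0.5^{(\ell_{i-1}(v)+1)+1}}\cdot w(v)$.

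Next I would analyze the while-loop. The loop of Step~\ref{item:increment level} increments $\ell(v)$ as long as $\sum_{e\in E(v)}\delta(e)>w(v)\parentheses{1-0.5^{\ell(v)+1}}$. Suppose for contradiction that the loop executes at least twice, so that $\ell(v)$ rises from $\ell_{i-1}(v)$ to at least $\ell_{i-1}(v)+2$. For the second increment to occur, the loop condition must hold with $\ell(v)=\ell_{i-1}(v)+1$, i.e.\ $\sum_{e\in E(v)}\delta(e)>w(v)\parentheses{1-0.5^{(\ell_{i-1}(v)+1)+1}}$. But $\delta$ is not changed inside Step~\ref{item:increment level} (only levels and deals are), so this contradicts the bound $\sum_{e\in E(v)}\delta(e)\le w(v)\parentheses{1-0.5^{(\ell_{i-1}(v)+1)+1}}$ established in the previous paragraph. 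Therefore the loop executes at most once, giving $\ell_i(v)\le\ell_{i-1}(v)+1$.

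I expect no real obstacle here; the only subtlety is bookkeeping about \emph{when} in the iteration each quantity is measured — one must be careful that $\sum_{e\in E(v)}\delta(e)$ is frozen between the end of Step~\ref{item:update} of iteration $i-1$ and the start of Step~\ref{item:increment level} of iteration $i$, and that the bound from Eq.~\ref{eq:lvlupCap} is precisely the value entering that while-loop. Once that is pinned down, the corollary is immediate, and it is exactly the property needed so that each iteration can be simulated in $O(1+f(A)/\log n)$ rounds in the ILP reduction (a single $f(A)$-bit message suffices to convey which vertices levelled up).
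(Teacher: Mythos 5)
Your proposal is correct and follows exactly the paper's route: the paper derives the bound $\sum_{e\in E(v)}\delta_{i-1}(e)\le\parentheses{1-0.5^{(\ell_{i-1}(v)+1)+1}}\cdot w(v)$ in Eq.~\ref{eq:lvlupCap} and then declares the corollary ``evident,'' while you merely make explicit the (correct) final step that the while-loop of Step~\ref{item:increment level} cannot fire a second time because $\delta$ is frozen during that loop. This is the same argument, just with the last inference spelled out.
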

We note that the change to the algorithm does not affect the correctness of claims~\ref{claim:vault},~\ref{claim:feasible},~\ref{claim:levelLessThanZ} and Lemma~\ref{lemma:regular passes}. Lemma~\ref{lemma:good} changes slightly, as there can now be twice as many $v$-stuck iterations:
\begin{lemma}\label{lemma:good2}
For every vertex $v$ and level $\ell(v)$, the number of $v$-stuck iterations is bounded by {$2\alpha$}.
\end{lemma}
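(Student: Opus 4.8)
The plan is to mirror the proof of Lemma~\ref{lemma:good} almost verbatim, tracking how the single modification to Line~\ref{item:update} — raising $\delta(e)$ by $\deal(e)/2$ instead of by $\deal(e)$ — propagates through the counting argument. Fix a vertex $v$ and a value $\ell(v)$ of its level. As in the original proof, I would first pin down the interval of values that $\sum_{e\in E(v)}\delta(e)$ traverses while $v$ sits at level $\ell(v)$: by the while-condition of Step~\ref{item:increment level}, $v$ only reaches level $\ell(v)$ once $\sum_{e\in E(v)}\delta(e)\ge w(v)(1-0.5^{\ell(v)})$ (trivially true for $\ell(v)=0$), and it must level up as soon as $\sum_{e\in E(v)}\delta(e)> w(v)(1-0.5^{\ell(v)+1})$. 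Hence all $v$-stuck iterations at level $\ell(v)$ occur while this sum lies in an interval of length $w(v)\parentheses{0.5^{\ell(v)}-0.5^{\ell(v)+1}}=w(v)\cdot 0.5^{\ell(v)+1}$.

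Next I would lower bound the per-iteration increment. In a $v$-stuck iteration, by Step~\ref{item:good} the vertex $v$ sent ``stuck'' to every $e\in E'(v)$, so none of these edges receives only ``raise'' messages, and hence none has its deal multiplied by $\alpha$ in Line~\ref{item:update}; the covered edges in $E(v)\setminus E'(v)$ have already terminated and their dual variables are frozen. Therefore, under the modified update rule, $\sum_{e\in E(v)}\delta(e)$ grows by exactly $\frac12\sum_{e\in E'(v)}\deal(e)$ in this iteration, and the ``stuck'' branch of Step~\ref{item:good} guarantees $\sum_{e\in E'(v)}\deal(e)>\frac{1}{\alpha}\cdot 0.5^{\ell(v)+1}\cdot w(v)$, so the increment strictly exceeds $\frac{1}{2\alpha}\cdot 0.5^{\ell(v)+1}\cdot w(v)$.

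Dividing the interval length by this per-iteration lower bound yields that the number of $v$-stuck iterations at level $\ell(v)$ is less than $\parentheses{w(v)\cdot 0.5^{\ell(v)+1}}\big/\parentheses{\frac{1}{2\alpha}\cdot 0.5^{\ell(v)+1}\cdot w(v)}=2\alpha$, which is the claimed bound. I do not expect a genuine obstacle here — it is essentially bookkeeping. The only point that warrants care is ensuring the factor $\frac12$ attaches to the increment and nowhere else: the ``stuck''-condition threshold in Step~\ref{item:good} is left unchanged by the modification, and it is solely the contribution to $\delta(e)$ in Step~\ref{item:update} that is halved, which is exactly why the bound degrades from $\alpha$ to $2\alpha$ and no further. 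One should also confirm that feasibility still licenses the ``$v$ must level up once $\sum_{e\in E(v)}\delta(e)>w(v)(1-0.5^{\ell(v)+1})$'' step under the modified rule; this is precisely what the revised proof of Claim~\ref{claim:feasible}, culminating in Eq.~\ref{eq:lvlupCap}, already establishes, and it also gives Corollary~\ref{cor:lvlupBound}, so no level is ever skipped and the per-level count above suffices.
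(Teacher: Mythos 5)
Your proof is correct and follows essentially the same route as the paper's: bound the interval that $\sum_{e\in E(v)}\delta(e)$ traverses at level $\ell(v)$ by $w(v)\cdot 0.5^{\ell(v)+1}$, lower-bound each stuck iteration's increment by $\frac{1}{2\alpha}\cdot 0.5^{\ell(v)+1}\cdot w(v)$ using the ``stuck'' condition of Step~\ref{item:good} and the halved update, and divide. The extra care you take about which deals get multiplied by $\alpha$ and about no level being skipped is consistent with, and slightly more explicit than, the paper's argument.
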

\begin{proof}
Notice that when $v$ reached the level $\ell(v)$, we had $\sum_{e\in E(v)} \delta(e) \ge w(v)(1-0.5^{\ell(v)})$.
%
The number of $v$-stuck iterations is then bounded by the number of times it can send a "stuck" message without reaching $\sum_{e\in E(v)} \delta(e) > w(v)(1-0.5^{\ell(v)+1})$. Indeed, once this inequality holds, the level of $v$ is incremented.
Every stuck iteration implies, by Line~\ref{item:good}, that $\sum_{e\in E'(v)} \deal(e) > \frac{1}{\alpha}\cdot 0.5^{\ell(v)+1} \cdot w(v)$. 
Therefore, we can bound the number of iteration by 
$
    \frac{w(v)(1-0.5^{\ell(v)+1}) - w(v)(1-0.5^{\ell(v)})}{  {\frac{1}{\alpha}\cdot 0.5^{\ell(v)+1} \cdot w(v)/2}} = {2\alpha}.
$\qedhere
\end{proof}
\medskip\noindent
We conclude that the algorithm remains correct and its asymptotic complexity does not change.
\end{document}